\theoremstyle{plain}
\newtheorem{theorem}{Theorem}[section]
\newtheorem{proposition}[theorem]{Proposition}
\newtheorem{conjecture}[theorem]{Conjecture}
\newtheorem{lemma}[theorem]{Lemma}
\theoremstyle{definition}
\newtheorem{definition}[theorem]{Definition}
\newcommand{\upim}{\vec{x}_{\pi_{<i}},\vec{1}_{\pi_{\ge i}}}
\newcommand{\upi}{\vec{x}_{\pi_{\le i}},\vec{1}_{\pi_{>i}}}
\newcommand{\dnim}{\vec{x}_{\pi_{<i}},\vec{0}_{\pi_{\ge i}}}
\newcommand{\dni}{\vec{x}_{\pi_{\le i}},\vec{0}_{\pi_{>i}}}
\newcommand{\upimk}{\vec{x}_{\kappa_{<i}},\vec{1}_{\kappa_{\ge i}}}
\newcommand{\upik}{\vec{x}_{\kappa_{\le i}},\vec{1}_{\kappa_{>i}}}
\newcommand{\dnimk}{\vec{x}_{\kappa_{<i}},\vec{0}_{\kappa_{\ge i}}}
\newcommand{\dnik}{\vec{x}_{\kappa_{\le i}},\vec{0}_{\kappa_{>i}}}
\newcommand{\upimky}{\vec{y}_{\kappa_{<j}},\vec{1}_{\kappa_{\ge j}}}
\newcommand{\upiky}{\vec{y}_{\kappa_{\le j}},\vec{1}_{\kappa_{>j}}}
\newcommand{\dnimky}{\vec{y}_{\kappa_{<j}},\vec{0}_{\kappa_{\ge j}}}
\newcommand{\dniky}{\vec{y}_{\kappa_{\le j}},\vec{0}_{\kappa_{>j}}}
\newcommand{\upimkx}{\vec{x}_{\kappa_{<j}},\vec{1}_{\kappa_{\ge j}}}
\newcommand{\upikx}{\vec{x}_{\kappa_{\le j}},\vec{1}_{\kappa_{>j}}}
\newcommand{\dnimkx}{\vec{x}_{\kappa_{<j}},\vec{0}_{\kappa_{\ge j}}}
\newcommand{\dnikx}{\vec{x}_{\kappa_{\le j}},\vec{0}_{\kappa_{>j}}}
\newcommand{\upjm}{\vec{x}_{\pi_{<j}},\vec{1}_{\pi_{\ge j}}}
\newcommand{\upj}{\vec{x}_{\pi_{\le j}},\vec{1}_{\pi_{>j}}}
\newcommand{\dnjm}{\vec{x}_{\pi_{<j}},\vec{0}_{\pi_{\ge j}}}
\newcommand{\dnj}{\vec{x}_{\pi_{\le j}},\vec{0}_{\pi_{>j}}}
\begin{document}

\title{Importance in systems with interval decisions}

\author{Sascha Kurz}
\address{Department of Mathematics, University of Bayreuth, 95440 Bayreuth, Germany}
\email{sascha.kurz@uni-bayreuth.de}

\date{}

\begin{abstract}
Given a system where the real-valued states of the agents are aggregated by a function to a real-valued 
state of the entire system, we are interested in the influence or importance of the different agents for that function. 
This generalizes the notion of power indices for binary voting systems to decisions over interval 
policy spaces and has applications in economics, engineering, security analysis, and other disciplines. Here, we 
study the question of importance in systems with interval decisions.  
Based on the classical Shapley-Shubik and Penrose-Banzhaf index, from binary voting, we motivate and analyze two importance 
measures. Additionally, we present some results for parametric classes of aggregation functions.

  \medskip
  
  \noindent
  \textbf{Keywords:} Importance; influence; power; interval decisions; state aggregation; Shapley-Shubik index; Penrose-Banzhaf index.\\ 
  \textbf{MSC:} 91B12; 94C10 
\end{abstract}

\maketitle

\section{Introduction}
\label{sec_introduction}
Consider a system where the agents (or components) each determine a real number $x_i$, which is then 
aggregated to another real number $f(\vec{x})$ representing the state of the entire system, where 
$\vec{x}:=(x_1,\dots,x_n)$. This abstract setting occurs in several applications. The values $x_i$ may 
encode the fault condition, normalized between zero and one, for several components of a complex system. 
Then, a suitable aggregation function, see \cite{grabischaggregation} for a survey, might be simply given 
by $f(\vec{x})=\max\{x_1,\dots,x_n\}$. For 
the estimation of unknown quantities, see \cite{galton1907vox}, the {\lq\lq}wisdom of the crowds{\rq\rq}, 
see \cite{surowiecki2004wisdom}, can be applied.\footnote{The effect itself is a purely statistical 
phenomenon and is studied widely in the literature. It can also be simulated by a single individual, see 
\cite{rauhut2011wisdom}. For binary decisions similar effects are studied under the name {\lq\lq}Condorcet 
Jury Theorem{\rq\rq}, see e.g.\ \cite{kanazawa1998brief}.} Depending on the context the mean or the median 
might be a suitable aggregation function, see e.g.\ \cite{chen2004eliminating,galton1907ballot} for some discussion. 
Things get more interesting if the components or agents are heterogeneous in terms of their impact on the 
aggregated value. An example is given by the weighted median, where each agent gets a non-negative integer 
weight $w_i$ such that $\sum_{i=1}^n w_i$ is odd.\footnote{For the definition of the weighted median it is 
neither necessary to restrict to integer weights nor to restrict the possible weight sums of subsets of the 
agents. However, this way we can simplify the technical details cf.~Section~\ref{sec_median} for the more 
general version.} Assume, to ease the notation, that the 
values $x_i$ are pairwise different. Arranging the values in increasing order $x_{i_1}<x_{i_2}<\dots<x_{i_n}$, 
let $1\le j\le n$ be the smallest index such that $\sum_{h=1}^j w_{i_h}>\sum_{h=1}^n w_h/2$. With this, the 
weighted median is given by $x_{i_j}$. Reasons for taking the weighted median instead of the median are 
manifold. When combining the judgment of multiple experts to a single value, different degrees of competence 
may be reflected by different weights, see e.g.\ Chapter~16 in \cite{MayerBooker1991}. In meta analysis, see 
e.g.\ \cite{cooper2009handbook}, aggregated data of differently sized experiments are combined. More generally, 
citing \cite{clark1976effects}: {\lq\lq}The aggregation problem can be defined as the information loss which 
occurs in the substitution of aggregate, or  macrolevel, data for individual, or  microlevel, data.{\rq\rq}. 
Whenever datasets are heterogeneous this has to be reflected somehow in the aggregation, where the weighted 
median is just one possible method, that, however, is commonly applied. For the effects of data aggregation in 
wireless sensor networks we refer e.g.\ to \cite{krishnamachari2002impact}. Even if the micro level data is 
completely available, data aggregation makes sense due to the computational complexity, see e.g.\ 
\cite{xi2009statistical}. Weighted median filters are also applied to sharpen images, see e.g.\ 
\cite{fischer2002weighted}. Due to the increasing share of solar and wind energy, transmission system operators 
are in need of accurate weather forecasts in order to economically regulate the stability of the power grid, see 
e.g.\ \cite{delucchi2011providing}. Typically, several such forecasts are combined with different weights in 
practice. Also fashion retailers invest quite some money to buy more accurate weather forecasts and combine them 
with freely available data, see e.g.\ \cite{bahng2012relationship} for the impact of temperature on sales. 
The median voter model in politics explains the output produced in the public sector by the preferences of the 
median voter, see e.g.\ \cite{holcombe1989median}. While there is some criticism, it is nevertheless applied in 
several applications. Assuming a two-tier voting system, differently sized constituencies of an assembly call for the 
weighted median, see e.g.\ \cite{maaser2007equal}. For a comprehensive and rigor treatise of the general concept 
of aggregation functions we refer the reader to \cite{grabischaggregation}. 

Using the weighted median or another aggregation function, whenever not all agents have an equal impact on the 
aggregation function, the question of the influence or importance of an agent arises. For an example let us continue with the 
weighted median. Assume that we have four agents with weights $w_1=5$, $w_2=4$, $w_3=3$, and $w_4=1$. Here the 
threshold or quota is $7$. Observe that in any ordering of the $x_i$, the value $x_4$ is never the weighted 
median. So, it is justified to say that the fourth agent has no influence on the aggregation function, i.e., no importance at all. For any 
two of the other agents we observe that their weight sum meets or exceeds the quota. So, the second largest value 
of the $x_i$ restricted to the first three agents determines the weighted median. Assuming equal distributions of 
the values $x_i$, we can say that the first three agents are symmetrical and have the same importance. Normalized 
to one, the importance vector of the four agents is given by $(\tfrac{1}{3},\tfrac{1}{3},\tfrac{1}{3},0)$. As a 
by-product we get the information that the weights $\vec{w}=(1,1,1,0)$, with a quota of $2$, lead to the same 
aggregation function when using the weighted median. So, weights can be different from importance. While 
it was easy to determine the importance vector in our example, things get more complicated and even ambiguous in 
more intricate examples like e.g.\ for the weight vector $\vec{w}=(3,1,1,1,1)$.

Nevertheless, the question of determining the importance of an agent or a component in a complex system is very relevant.  
Identifying a component without any importance may allow to remove that component and to reduce production costs. Due  
to security reasons or fault tolerance it might be beneficial if the importance of any component is not too large.  
Important agents can be the goal of bribery or important components be the target of technical attacks. For a firm 
it is important to be not too dependent on one of her external suppliers. From the other side, a supply firm is 
interested in knowing the impact of their contribution to the final product to potentially raise prices. There is also 
another point of view. In an application, the shape of the aggregation function may be defined besides some weights for the 
components, like in the case of the weighted median. Reliability, expertise, accuracy, or any other measure for a 
desired importance vector $\sigma$ given, the question arises how to choose the weights such that the resulting importance vector  
meets $\sigma$ as closely as possible. So, we face a problem of system design.    

The aggregation problem can also be considered as the combination of probability distributions, see e.g,\  
\cite{clemen1999combining,genest1986combining} for an entry point into the related literature.

The question of the importance of agents is studied in the literature for general aggregation functions, see e.g.\ 
\cite[Section 10.3]{grabischaggregation}. However, a huge stream of the literature considers the problem 
restricted to the context of binary voting systems. There, the agents vote 
either {\lq\lq}yes{\rq} or {\lq\lq}no{\rq\rq}, encoded as $1$ and $0$, respectively, on a certain proposal. The aggregated 
group decision then is either to accept (and implement) or to reject the proposal. Von Neumann and Morgenstern introduced the 
notion of a simple game $v$ in \cite{von1953theory}, which is an appropriate model in many applications. For any subset $S$ of supporters 
$v(S)\in \{0,1\}$, where $v$ is surjective and monotone, i.e., $v(S)\le v(T)$ for all $S\subseteq T$. The importance, influence or power 
of an agent in a simple game is measured by so-called power indices like the Shapley-Shubik \cite{shapley1954method} or the 
Penrose-Banzhaf index \cite{banzhaf1964weighted,penrose1946elementary}, see also \cite{felsenthal1998measurement,riker1986first}. The 
model is appropriate to model situations as complex as networks of companies, where several agents own shares of some companies 
that are owning shares of other companies themselves and so are indirectly controlling each other. However, the setting is binary, 
so that economic issues like e.g.\ monetary policy, tax rates, or spending on climate change mitigation does not fit and call 
for an  interval of policy alternatives instead. In the context of voting the system design problem, from the previous paragraph, 
is called {\lq\lq}inverse power index problem{\rq\rq}, see e.g.\ \cite{de2017inverse,koriyama2013optimal,kurz2012inverse}. For 
TU games, a generalization of simple games, the problem is easy, see e.g.\ \cite{dragan2005inverse,faigle2016bases}. For 
non-binary continuous decisions we refer to \cite{kurz2017democratic} and the references therein. Binary decisions with 
continuous signals are e.g.\ considered in \cite{mcmurray2012aggregating}. Even in the binary case 
the importance or power vectors of a given simple game can differ for different power indices, so that the question for the 
{\lq\lq}right{\rq\rq} index arises. Axiomatizations and comparative studies of the properties of the proposed power indices 
aid the practitioner in that task.  
  
Having argued the relevance of the problem, of importance in a complex system with states from an interval, and highlighted its 
connection to voting, we aim to develop importance measures for this setting in the present paper. To this end, we interpret 
the classic Shapley-Shubik and Penrose-Banzhaf indices from a slightly different perspective and generalize the underlying definition 
to our setting. In the same vein the notion of a simple game is generalized. This lays the ground  
to study the question of importance in systems with interval decisions. We remark that some preliminary ideas in that direction have 
been presented in \cite{kurz2014measuring}. While the question is interesting in convex spaces of any dimension, see e.g.\ 
\cite{martin2017owen}, we limit ourselves to intervals of real numbers. The introduction of two measurements of 
importance is not comprehensive at all and more suggestions are deserved. Evaluating the defined importance measures directly 
becomes computationally infeasible quickly if the number of agents increases, which is similar to the situation for power 
indices for simple games. For some classes of aggregation functions we are able to determine either improved algorithms or 
analytical formulas. Also the study of the mathematical properties of the two importance measures is touched.

The remaining part of this paper is structured as follows. In Section~\ref{sec_binary_voting} we briefly collect the basic 
definitions and facts for binary voting systems and power indices. A specific interpretation of the Shapley-Shubik and the 
Penrose-Banzhaf index is the topic of Section~\ref{sec_revisited}. In Section~\ref{sec_model} we generalize simple games 
to simple aggregation functions and power indices to importance measures. Based on the stated interpretation, we generalize 
the Shapley-Shubik and the Penrose-Banzhaf index in Section~\ref{sec_importance_measures}. First mathematical properties 
of these two importance measures are studied in Section~\ref{sec_properties}. Nevertheless, we did not completely succeed 
in revealing the properties of importance in weighted medians, we collect our findings in Section~\ref{sec_median}. We close 
with a conclusion and some open problems in Section~\ref{sec_conclusion}.  

\section{Binary voting systems and power indices} 
\label{sec_binary_voting}

As mentioned in the introduction, we will go by the insights obtained in studying binary voting systems and corresponding 
power indices in order to develop more general importance measures. By $N=\{1,\dots,n\}$ we denote the set of agents. A 
\emph{simple game} is a surjective and monotone mapping $v\colon 2^N\to\{0,1\}$ from the set of subsets $\{S\subseteq N\}$ of $N$, 
i.e., the power set $2^N$ of $N$, into a binary output $\{0,1\}$. \emph{Monotone} means $v(S)\le v(T)$ for all $\emptyset\subseteq 
S\subseteq T\subseteq N$. The values of this mapping can be interpreted as follows. For each subset $S$ of $N$, called \emph{coalition}, 
we have $v(S)=1$ if the members of $S$ can bring through a proposal nevertheless the members of $N\backslash S$ are against it. If $v(S)=1$ 
we speak of a \emph{winning coalition} and a \emph{losing coalition} otherwise. The required monotonicity is quite natural in that 
context, i.e., if the members of a coalition $S$ can bring trough a proposal, then additional supporters should not harm. The 
technical condition of surjectivity, in conjunction with monotonicity, implies that $\emptyset$ is a losing coalition and 
$N$ a winning coalition. This is indeed also quite natural, i.e., if no one supports a proposal then it should not be accepted and 
if otherwise everybody is in favor of a proposal, then there is no reason to reject it. (Typically, surjectivity of $v$ is replaced 
by the equivalent conditions $v(\emptyset)=0$ and $v(N)=1$.) Simple majority for five agents can be modeled by a simple game whose 
winning coalitions are exactly those that have at least three members.

Each simple game is uniquely characterized by either listing all winning or losing coalitions. However, such a representation is not 
very compact.  A slight reduction can be obtained by further exploiting monotonicity. To this end, a winning coalition $S$ is called 
\emph{minimal} if all of its proper subsets are losing. Similarly, a losing coalition $T$ is called \emph{maximal} if all of its proper 
supersets are winning. In our example of simple majority for five agents, the minimal winning coalitions are those with exactly 
three members and the maximal losing coalitions are those with exactly two members. In some cases an even more compact representation, 
based on weights, is possible. Therefore, we call a simple game $v$ weighted if there exist weights $w_1,\dots,w_n\in\mathbb{R}_{\ge 0}$ 
and a quota $q\in\mathbb{R}_{>0}$ such that $v(S)=1$ exactly if $\vec{w}(S):=\sum_{i\in S} w_i\ge q$. As notation we use $[q;\vec{w}]$, i.e., 
$[3;1,1,1,1,1]$ describes simple majority for five agents. As observed in the introduction, different weights can re\-pre\-sent the same 
weighted game, e.g., $[7;5,4,3,1]=[7;4,4,4,1]=[2;1,1,1,0]$. For any weighted game $[q;\vec{w}]$ the difference between the minimum weight 
of a winning and the maximum weight of a losing coalition is some finite positive number, so that we can slightly modify weights and quota 
to rational numbers without changing the underlying simple game. Moreover, by multiplying with the least common multiple of the denominators 
we can assume that the quota and all weights are integers. We note that not every simple game is weighted. However, every simple game 
$v$ can be written as the intersection of a finite number of weighted games $[q_1;\vec{w}_1]$, \dots, $[q_r;\vec{w}_r]$, where 
\begin{equation}
  \left(\bigcap\limits_{i=1}^r [q_i;\vec{w}_i]\right)(S)=\min\left\{[q_i;\vec{w}_i](S)\,:\,1\le i\le r\right\}
\end{equation}    
for all coalitions $S\subseteq N$. For the description of the weighted median in terms of weighted games we need further subclasses of 
simple games. A simple game $v$ is called \emph{proper} if the complement $N\backslash S$ of any winning coalition $S\subseteq N$ is 
losing. If a simple game is not proper, then it may happen that a coalition and its complement can change the status quo by turns, 
which leads to a very unpleasant and unstable situation, so that some researchers only consider proper simple games. Similarly, a 
simple game is called \emph{strong} if the complement $N\backslash T$ of any losing coalition $T\subseteq N$ is winning. A simple 
game that is both proper and strong is called \emph{constant-sum} (or self-dual or decisive). Weighted constant-sum games allow the 
definition of a corresponding aggregation function with a unique weighted median in all cases where the values $x_i$ are pairwise 
different. Integer weights with an odd sum and a quota of half the weight sum (plus one half) are sufficient to guarantee the 
constant-sum property, see Section~\ref{sec_median}. 

Several types of agents can be distinguished in a simple game $v$. Agent~$i\in N$ is called \emph{null} if $v(S)=v(S\cup\{i\})$ 
for all $\emptyset\subseteq S\subseteq N\backslash\{i\}$, i.e., agent~$i$ is not contained in any minimal winning coalition. An agent 
that is contained in every minimal winning coalition is called a veto player. If $\{i\}$ is a winning coalition (note that $\emptyset$ 
is a losing coalition), then player~$i$ is called a \emph{passer}. If additionally all other agents are nulls, then we call 
agent~$i$ a \emph{dictator}. Two agents $i$ and $j$ are called \emph{symmetric}, if $v(S\cup\{i\})=v(S\cup\{j\})$ for all 
$\emptyset\subseteq S\subseteq N\backslash\{i,j\}$. In $[7;5,4,3,1]=[2;1,1,1,0]$ the first three agents are symmetric and the fourth 
agent is a null.

In order to measure the importance of agents in simple games several power indices were introduced in the literature. A power index $p$ 
is a mapping from the set of simple (or weighted) games on $n$ agents into $\mathbb{R}_{\ge 0}^n$. Typically power indices 
are defined for all positive integers $n$, so that we have a family of such mappings. By $p_i(v)$ we denote the $i$th component 
of $p(v)$, i.e., the power of agent~$i$. The \emph{Shapley-Shubik index} is defined as 
\begin{equation}
  \operatorname{SSI}_i(v)=\sum_{S\subseteq N\backslash\{i\}} \frac{|S|!\cdot(n-|S|-1)!}{n!}\cdot\left(v(S\cup\{i\})-v(S)\right).
\end{equation}
We call $S\subseteq N\backslash \{i\}$ a \emph{swing} for agent~$i$ if $v(S\cup\{i\})-v(S)=1$ in a given simple game $v$. In other 
words, $S$ is a losing coalition and $S\cup\{i\}$ a winning coalition. Counting the swings by 
$$\sum_{S\subseteq N\backslash\{i\}}\left(v(S\cup\{i\})-v(S)\right)$$ gives the \emph{absolute Penrose-Banzhaf index}. Normalizing 
via the transformation $p_i(v)/\sum_{j=1}^n p_j(v)$ then gives the \emph{(relative) Penrose-Banzhaf index}. In general, we call 
a power index \emph{efficient} if $\sum_{i=1}^n p_i(v)=1$ for all games $v$.  We call a power index $p$ 
\emph{symmetric} if $p_i(v)=p_j(v)$ for symmetric agents $i,j$ in $v$. If $p_i(v)=0$ for every null $i$ of $v$, then we say that 
$p$ satisfies the \emph{null property}. Both, the Shapley-Shubik and the Penrose-Banzhaf index, are efficient, symmetric, and 
satisfy the null property. The Shapley-Shubik index additionally satisfies the \emph{transfer axiom} 
\begin{equation}
  \varphi_i(u)+\varphi_i(v)= \varphi_i(u\vee v)+\varphi_i(u\wedge v)
\end{equation}    
for all $1\le i\le n$, where $(u\vee v)(S)=\max\{u(S),v(S)\}$ and $(u\wedge v)(S)=\min\{u(S),v(S)\}$. 
In the other direction, the Shapley-Shubik index is the unique power index that satisfies symmetry, efficiency, 
the null property, and the transfer axiom, see \cite{dubey1975uniqueness}. An axiomatization of the Penrose-Banzhaf index 
was given in \cite{dubey1979mathematical}. The absolute Penrose-Banzhaf index also satisfies the transfer axiom.

\section{The definition of the Shapley-Shubik and the Penrose-Banzhaf index revisited}
\label{sec_revisited}

Based on precedent work, the following model was considered in \cite{felsenthal1996alternative}: 
Agents perform a roll-call. More precisely, all $n!$ possible orders $\pi\colon N\to N$ in which
the agents are called are assumed to be equiprobable and the votes of each agent are independent 
with expectation $0\le p\le 1$ for voting $1$, i.e., the probability for voting $1$ is exactly $p$. 
For a given simple game $v$ the \emph{pivotal} agent~$i$ is determined by the unique index $i$ 
such that $\{j\in N\,:\, \pi(j)<\pi(i)\}$ is losing and $\{j\in N\,:\, \pi(j)\le \pi(i)\}$ is winning in $v$. 
Interestingly enough, the Shapley-Shubik index of agent $i$ in $v$ equals the probability that agent $i$ is pivotal 
in the above roll-call model. Note that this statement is independent of $p$. The assumptions on the model 
can be even further weakened to correlated agents still maintaining the coincidence between the Shapley-Shubik index 
and pivot probabilities, see \cite{hu2006asymmetric}.

Let us take another perspective and consider the Shapley-Shubik index as a measurement for the reduction of uncertainty. 
To this end, note that if the votes of all $n$ agents are known, then the aggregated decision, modeled by $v(S)$ for 
the given simple game $v$ and the coalition $S\subseteq N$ of the agents voting {\lq\lq}yes{\rq\rq}, is uniquely determined.  
In the roll-call model we can consider our knowledge on the set of possible outcomes before and after an agent announces 
his or her vote. In the beginning an aggregated decision of both {\lq\lq}yes{\rq\rq} and {\lq\lq}no{\rq\rq} is possible,  
since $v(\emptyset)=0$, $v(N)=1$, and we do not know how the agents will be voting. After the announcement of a certain 
agent the outcome is definitely determined. In other words, that agent reduces the uncertainty about the aggregated outcome 
by one. Let us consider a small example for the simple game $v=[2;1,1,1]$ and the ordering $(1,2,3)$ of the agents. Moreover, 
let us assume that the agents will vote $0$, $0$, and $1$, respectively. After agent~$1$ announces his or her vote both 
outcomes, $1$ or $0$, are possible since the other two agents may both vote $1$ or $0$. After the announcement of agent~$2$, 
the aggregated outcome is determined to be $0$. Also for the ordering $(1,3,2)$ agent~$2$ decides the final outcome. Averaging 
over all possible orderings and voting vectors again gives the Shapley-Shubik index, see \cite{hu2006asymmetric}. 
Note that the aggregated outcome can be determined to either $1$ or $0$, where both cases are symmetric in a certain sense, so 
that the notion of being pivotal applies. If the aggregated decision is a real number in $[0,1]$ instead of $\{0,1\}$ the 
uncertainty about the final outcome can be reduced by several agents at different points in time. Also the degree of reduction 
can be different within the same ordering and input vector $\vec{x}$. So, while there is not much a difference for the binary case, 
those things play a role in the interval case, see Section~\ref{sec_importance_measures}.

In order to formalize things we introduce some more notation. Let $\vec{1}$ and $\vec{0}$ denote the vectors consisting solely of 
ones and zeroes, respectively. For $\vec{x}\in\mathbb{R}^n$ and $S\subseteq\{1,\dots,n\}=N$ we write $\vec{x}_S$ for the restriction 
$(x_i)_{i\in S}$ and $\vec{x}_{-S}$ for $\vec{x}_{N\backslash S}$ (abbreviating $\vec{x}_{-i}=\vec{x}_{-\{i\}}$). 
To each simple game $v$ we associate a mapping $\tilde{v}\colon\{0,1\}^n\to\{0,1\}$, $(\vec{1}_S,\vec{0}_{-S})\mapsto v(S)$. For a 
given permutation $\pi\in\mathcal{S}_n$ of $N$ and $i\in N$, we set $\pi_{<i}=\{j\in N\,:\, \pi(j)<\pi(i)\}$, 
$\pi_{\le i}=\{j\in N\,:\, \pi(j)\le \pi(i)\}$, $\pi_{>i}=N\backslash \pi_{\le i}$, and $\pi_{\ge i}=N\backslash \pi_{< i}$. With this 
and $p=\tfrac{1}{2}$, we have  
$\operatorname{SSI}_i(v) =$
$$
   \frac{1}{n!\cdot 2^n}\cdot\!\!\!\sum_{(\pi,\vec{x})\in\mathcal{S}_n\times \{0,1\}^n}\!\!\! \Big(\tilde{v}(\upim)-\tilde{v}(\dnim)\Big)
    -\Big(\tilde{v}(\upi)-\tilde{v}(\dni))\Big).
$$

Now let us look at the Penrose-Banzhaf index again. Assume, for a given agent~$i$, that all other agents have announced their 
vote. To what degree can agent~$i$ move the aggregated outcome? In the binary setting the range is given by $v(S\cup\{i\})-v(S)$, 
where $v$ is the underlying simple game and $S\subseteq N\backslash\{i\}$ is the set of agents voting {\lq\lq}yes{\rq\rq}. 
If $v(S\cup\{i\})-v(S)=0$, then agent~$i$ cannot change the aggregated outcome at all. If $v(S\cup\{i\})-v(S)=1$, then agent~$i$ 
can shift the aggregated outcome between $0$ and $1$. We may talk of a strategic point of view. In terms of orderings in the roll-call 
model we might say that the Shapley-Shubik index treats all possible orderings equally likely while the Penrose-Banzhaf index 
just considers orderings where the considered agent is last. Again the range of importance is more segmented in the interval case, see 
Section~\ref{sec_importance_measures}. Condensed as a mathematical formula, the Penrose-Banzhaf index for agent~$i$ in a simple game 
$v$ is given by
$$
  \frac{1}{2^n}\cdot \sum_{\vec{x}\in\{0,1\}^n} \Big( \tilde{v}(\vec{x}_{-i},\vec{1}_i)-\tilde{v}(\vec{x}_{-i},\vec{0}_i)\Big).
$$    

\section{The general decision model}
\label{sec_model}

Consider a system of $n$ agents each described by some value $x_i\in\mathcal{I}$. As abbreviation 
we write $\vec{x}$ for the vector $(x_1,\dots,x_n)$, i.e., the state vector of all agents. Further 
assume the existence of an aggregation function $f\colon\mathcal{I}^n\to \mathcal{I}$, where $f(\vec{x})$ 
is a single \textit{aggregated} state. As a crucial 
assumption we require that $\mathcal{I}$ is an interval. In the introduction 
we have mentioned several applications calling for such state spaces. In general the subsequent problem 
is interesting for arbitrary convex spaces. To further ease the notation we consider 
bounded and closed intervals $[a,b]$ only. Via $x\mapsto (x-a)/(b-a)$ we can normalize any such interval 
of positive length to the interval $[0,1]$ that we are considering in the following.     

We are interested in the \textit{importance} of variable $x_i$ for $f$. At a certain state vector $\vec{x}$ 
the partial derivative with respect to variable $x_i$ is an appropriate quantification (assuming differentiability). 
However, we are interested in a more global measure assigning a single non-negative real value to each index $i$, i.e., 
a mapping $p$ from the set of (suitable) aggregation functions into $\mathbb{R}_{\ge 0}^n$. As outlined in the introduction 
this models the importance of a certain agent in a complex system on the decision outcome of the entire system, with the 
aim to distinguish heterogeneous agents according to their degree of importance. As a normalization we require that 
the entries of $p(f)$ sum to one.

With this rather vague description of an importance measure $p$ we remark that, again, this question is interesting for 
a huge variety of aggregation functions. However, in order to obtain stronger results we restrict ourselves on specific 
classes of aggregation functions cf.~\cite{grabischaggregation}. Conducted by the concept of simple games we define, 
cf.~\cite{kurz2014measuring}:

\begin{definition}
  \label{def_simple}
  For a positive integer $n$ a \emph{simple aggregation function} $f$ is a mapping from $[0,1]^n$ to $[0,1]$ that is surjective, 
  continuous, and weakly monotonic increasing, i.e., $f(\vec{x})\le f(\vec{y})$ for all $\vec{x}\le \vec{y}$, i.e., 
  $x_i\le y_i$ for all $1\le i\le n$.  
\end{definition}   

We remark that an \emph{aggregation function}, see e.g.\ \cite{grabischaggregation}, is defined in the literature as a
surjective and weakly monotonic increasing mapping from $\mathcal{I}^n$ to $\mathcal{I}$, where $\mathcal{I}$ is an interval 
of the real numbers. By adding the word \textit{simple} we want to emphasize that $\mathcal{I}=[0,1]$. To ease the subsequent 
mathematical assumptions on the existence of integrals, see e.g.\ Definition~\ref{def_ssi} and Definition~\ref{def_bzi}, and proofs 
we additionally assume continuity.   

For each $\vec{w}\in\mathbb{R}_{\ge 0}^n$ with $\sum_{i=1}^n w_i=1$ the \emph{weighted mean} 
$f(\vec{x}):=\vec{w}^\top \vec{x}=\sum_{i=1}^n w_ix_i$ is a simple aggregation function. For $\vec{w}\in\mathbb{N}^n$, such that 
$\sum_{i=1}^n w_i$ is odd, the \emph{weighted median} is a simple aggregation function. Other examples are given e.g.\ by 
$\hat{f}(x_1,x_2,x_3)=\frac{1x_1^2+2x_2^2+3x_3^2}{6}$ or $\tilde{f}(x_1,x_2,x_3)=x_1x_2^2x_3^3$. Of course there is no need for an explicit 
formula. As an example we consider the so-called \emph{Hegselmann--Krause} or \emph{bounded confidence model} \cite{hegselmann2002opinion}. 
Adjusted to our state space $[0,1]$ the model is as follows: $n$ agents have initial opinions $y_i^0\in[0,1]$. For a given 
parameter $\varepsilon\in (0,\tfrac{1}{2})$ opinions change in discrete time steps via the recursion 
\begin{equation}
  y_i^{t+1}:=\sum_{j\in N_i(t)} y_j^t / \left|N_i(t)\right|,
\end{equation} 
where $N_i(t):=\left\{1\le j\le n\,:\, \left|y_i^t-y_j^t\right|\le\varepsilon\right\}$. Under the stated assumptions, and in 
several generalizations, the opinions converge to a steady state in finite time. Taking $\vec{x}$ as the initial opinions 
we may define $f(\vec{x})$ as the resulting steady state and easily check that this also gives a simple aggregation function 
(except possibly continuity). 
Since all agents (or variables) are homogeneous the question of importance is not very interesting. However, we can 
simply generalize the model and include some weights. Opinion dynamics based on further ways of averaging are studied in 
\cite{hegselmann2005opinion} and may serve for the definition of interesting simple aggregation functions. 

Having the transfer axiom in mind, we easily observe:
\begin{lemma}
  Let $f$ and $g$ be two simple aggregation functions for the same number $n\ge 1$ of agents. Then, 
  $f\vee g$ and $f\wedge g$, defined by $(f\vee g)(\vec{x})=\max\{f(\vec{x}),g(\vec{x})\}$ and 
  $(f\wedge g)(\vec{x})=\min\{f(\vec{x}),g(\vec{x})\}$ for all $\vec{x}\in[0,1]^n$, are simple aggregation 
  functions.   
\end{lemma}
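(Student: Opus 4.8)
The plan is simply to check the three defining properties of a simple aggregation function (surjectivity, continuity, weak monotonicity) for $f\vee g$ and $f\wedge g$, treating both cases in parallel since they are symmetric. First, both maps land in $[0,1]$: for every $\vec{x}\in[0,1]^n$ we have $f(\vec{x}),g(\vec{x})\in[0,1]$, hence $\max\{f(\vec{x}),g(\vec{x})\}$ and $\min\{f(\vec{x}),g(\vec{x})\}$ lie in $[0,1]$ as well, so $f\vee g$ and $f\wedge g$ are well-defined mappings $[0,1]^n\to[0,1]$.

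For continuity, I would use the elementary identities $\max\{a,b\}=\tfrac12\big(a+b+|a-b|\big)$ and $\min\{a,b\}=\tfrac12\big(a+b-|a-b|\big)$; since $f$ and $g$ are continuous and sums, differences, absolute values, and scalar multiples of continuous functions are continuous, both $f\vee g$ and $f\wedge g$ are continuous. For weak monotonicity, suppose $\vec{x}\le\vec{y}$. Then $f(\vec{x})\le f(\vec{y})$ and $g(\vec{x})\le g(\vec{y})$ by monotonicity of $f$ and $g$, and since $(s,t)\mapsto\max\{s,t\}$ and $(s,t)\mapsto\min\{s,t\}$ are nondecreasing in each argument, we obtain $(f\vee g)(\vec{x})\le(f\vee g)(\vec{y})$ and likewise for $f\wedge g$.

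Surjectivity is the one point needing a short argument. From monotonicity, $f(\vec{0})\le f(\vec{x})\le f(\vec{1})$ for all $\vec{x}\in[0,1]^n$, so the range of $f$ is contained in $[f(\vec{0}),f(\vec{1})]$; since $f$ is surjective onto $[0,1]$ this forces $f(\vec{0})=0$ and $f(\vec{1})=1$, and the same holds for $g$. Consequently $(f\vee g)(\vec{0})=\max\{0,0\}=0$, $(f\vee g)(\vec{1})=\max\{1,1\}=1$, and analogously $(f\wedge g)(\vec{0})=0$, $(f\wedge g)(\vec{1})=1$. Restricting the (continuous) function $f\vee g$ to the segment $t\mapsto(t,\dots,t)$, $t\in[0,1]$, gives a continuous real function taking the values $0$ and $1$, so by the intermediate value theorem its range, and hence the range of $f\vee g$, contains all of $[0,1]$; combined with the containment in $[0,1]$ already noted, $f\vee g$ is surjective. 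The same argument applies verbatim to $f\wedge g$. There is no real obstacle here; the only step worth stating carefully is the deduction $f(\vec{0})=0$, $f(\vec{1})=1$ from surjectivity plus monotonicity, which is what lets the intermediate value theorem close the surjectivity check.
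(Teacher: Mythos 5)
Your proof is correct and complete; the paper itself states this lemma without proof (merely "we easily observe"), and your verification of continuity, weak monotonicity, and surjectivity is exactly the standard check one would supply. The key point you rightly isolate — that surjectivity plus monotonicity force $f(\vec{0})=g(\vec{0})=0$ and $f(\vec{1})=g(\vec{1})=1$, after which the intermediate value theorem along the diagonal (or, equivalently, connectedness of the image) yields surjectivity of $f\vee g$ and $f\wedge g$ — is precisely the only step that is not immediate.
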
    
 
We also want to transfer the classification of types of agents in a simple aggregation function. 
\begin{definition}
  \label{def_agent_types}
  Let $f$ be a simple aggregation function for $n$ agents.
  \begin{enumerate}
    \item[(i)] If $f(\vec{x})=f(\vec{y})$ for all $\vec{x},\vec{y}\in[0,1]^n$ with $x_j=y_j$ for all $j\in \{1,\dots,n\}\backslash\{i\}$, 
               then we call agent~$i$ a \emph{null}.
    \item[(ii)] If $f(\vec{x})=f(\vec{y})$ for all $\vec{x},\vec{y}\in[0,1]^n$ with $x_i=y_i$, 
               then we call agent~$i$ a \emph{dictator}.
    \item[(iii)] If $f(\vec{x})=f(\vec{y})$ for all $\vec{x},\vec{y}\in[0,1]^n$ with $x_i=y_j$, $x_j=y_i$, and $x_h=y_h$ for all 
               $h\in \{1,\dots,n\}\backslash\{i,j\}$, then we call agent~$i$ and agent~$j$ \emph{equivalent}.                      
  \end{enumerate}
\end{definition}

It seems reasonable to require the following conditions for a measure of importance in a simple aggregation function:
\begin{definition}
  \label{def_importance_measure}
  An \emph{importance measure} is a mapping $p$ from the set of all simple aggregation functions for $n$ agents into $\mathbb{R}_{\ge 0}^n$ 
  that is
  \begin{enumerate}
    \item[(i)] \emph{efficient}, i.e., $\sum_{i=1}^n p_i(f)=1$;
    \item[(ii)] \emph{symmetric}, i.e., $p_i(f)=p_j(f)$ for symmetric agents $1\le i,j\le n$; and
    \item[(iii)] has the \emph{null property}, i.e., $p_i(f)=0$ for every null $1\le i\le n$.    
  \end{enumerate} 
\end{definition}   

Note that efficiency and the null property implies $p_i(f)=1$ for an agent~$i$ that is a dictator, since 
all other agents have to be nulls.  

\begin{definition}
  An \emph{importance measure} $p$ satisfies the \emph{transfer axiom} if
  $p_i(f)+p_i(g)= p_i(f\vee g)+p_i(f\wedge g)$   
  for all simple aggregation functions $f$ and $g$ on $n\ge 1$ agents and all agents $1\le i\le n$.
\end{definition}

Depending on the application further 
properties might be desirable. In the following section we will introduce two reasonable importance measures.

\section{Two importance measures for systems with interval decisions}
\label{sec_importance_measures}
Motivated by the interpretation of the Shapley-Shubik and the Penrose-Banzhaf index in Section~\ref{sec_revisited}, we introduce 
two importance measures for simple aggregation functions. Let us start with the generalization of the Shapley-Shubik index. We stick 
to the roll-call model and assume a given ordering $\pi\colon N\to\ N$ of the agents and a given simple aggregation function $f$. For 
a given agent $i$ we consider the case where all agents with $\pi(j)<\pi(i)$ have already announced there state $x_j$. Given that information 
we are interested in the uncertainty of the possible value of $f(\vec{x})$, where $\vec{x}$ is only partially specified. 

Since $f$ is monotone, $f(\upim)$ is the maximal value that can be attained by $f(\vec{x})$ if the $x_j$ of all 
agents with $\pi(j)<\pi(i)$ are fixed. Similarly, $f(\dnim)$ is the minimal value that can be attained by 
$f(\vec{x})$ if the $x_j$ of all agents with $\pi(j)<\pi(i)$ are fixed. Since $f$ is continuous, all values in the interval between 
$f(\dnim)$ and $f(\upim)$ can be attained by some vector $\vec{x}\in[0,1]^n$, 
where the entries of all agents with $\pi(j)<\pi(i)$ are fixed. The length
$
  f(\upim)-f(\dnim)
$
of that interval is a suitable measure for the uncertainty of the simple aggregation function $f$ before agent $i$ announces his or her state $x_i$, 
with respect to the ordering $\pi$ and the state vector $\vec{x}$. Similarly, the uncertainty after the announcement of agent~$i$ is given 
by 
$
  f(\upi)-f(\dni)
$.
The difference between both values 
summed over all possible orderings and averaged over all possible state vectors $\vec{x}$ may serve as a suitable 
measurement of importance:
\begin{definition}
  \label{def_ssi}
    For a simple aggregation function $f$ for $n\ge 1$ agents we set
  \begin{eqnarray}
    \varphi_i(f)&:=&\frac{1}{n!}\cdot\sum_{\pi\in\mathcal{S}_n} \int_0^1\dots\int_0^1
    \Big(f(\upim)-f(\dnim)\Big)\nonumber\\
    &&-\Big(f(\upi)-f(\dni)\Big) 
    \operatorname{d}x_1\,\dots\,\operatorname{d}x_n,
  \end{eqnarray}
  for each agent $i\in N$. 
\end{definition}
Here we assume that the states of all agents are independent and that all state vectors $\vec{x}$ occur with equal probability. This 
assumption can of course be adjusted easily. As an example we consider the two simple aggregation functions 
$\hat{f}(x_1,x_2,x_3)=\frac{1x_1^2+2x_2^2+3x_3^2}{6}$ and $\tilde{f}(x_1,x_2,x_3)=x_1x_2^2x_3^3$. 

\begin{table}[htp]
  \begin{center}
    \begin{tabular}{cc}
      \hline
      $\pi\in\mathcal{S}_3$ & $3$-fold integral\\
      \hline
      $(1,2,3)$ & $\int\limits_{\vec{x}\in[0,1]^3} \left(\frac{6}{6}-\frac{x_1^2+5}{6}+\frac{x_1^2}{6}-\frac{0}{6}\right)\,\operatorname{d}\vec{x}=\frac{1}{6}$ \\
      $(1,3,2)$ & $\int\limits_{\vec{x}\in[0,1]^3} \left(\frac{6}{6}-\frac{x_1^2+5}{6}+\frac{x_1^2}{6}-\frac{0}{6}\right)\,\operatorname{d}\vec{x}=\frac{1}{6}$ \\
      $(2,1,3)$ & $\int\limits_{\vec{x}\in[0,1]^3} \left(\frac{2x_2^2+4}{6}-\frac{x_1^2+2x_2^2+3}{6}+\frac{x_1^2+2x_2^2}{6}-\frac{2x_2^2}{6}\right)\,\operatorname{d}\vec{x}=\frac{1}{6}$ \\
      $(2,3,1)$ & $\int\limits_{\vec{x}\in[0,1]^3} \left(\frac{2x_2^2+3x_3^2+1}{6}-\frac{x_1^2+2x_2^2+3x_3^2}{6}+\frac{x_1^2+2x_2^2+3x_3^2}{6}-\frac{2x_2^2+3x_3^2}{6}\right)\,\operatorname{d}\vec{x}=\frac{1}{6}$ \\
      $(3,1,2)$ & $\int\limits_{\vec{x}\in[0,1]^3} \left(\frac{3x_3^2+3}{6}-\frac{x_1^2+3x_3^2+2}{6}+\frac{x_1^2+3x_3^2}{6}-\frac{3x_3^2}{6}\right)\,\operatorname{d}\vec{x}=\frac{1}{6}$ \\
      $(3,2,1)$ & $\int\limits_{\vec{x}\in[0,1]^3} \left(\frac{2x_2^2+3x_3^2+1}{6}-\frac{x_1^2+2x_2^2+3x_3^2}{6}+\frac{x_1^2+2x_2^2+3x_3^2}{6}-\frac{2x_2^2+3x_3^2}{6}\right)\,\operatorname{d}\vec{x}=\frac{1}{6}$ \\
      \hline
    \end{tabular}
    \caption{Details for $\varphi_1(\hat{f})$.}
    \label{table_ex_1_ssi_1}
  \end{center}
\end{table}

\begin{table}[htp]
  \begin{center}
    \begin{tabular}{cc}
      \hline
      $\pi\in\mathcal{S}_3$ & $3$-fold integral\\
      \hline
      $(2,1,3)$ & $\int\limits_{\vec{x}\in[0,1]^3} \left(\frac{6}{6}-\frac{x_2^2+4}{6}+\frac{x_2^2}{6}-\frac{0}{6}\right)\,\operatorname{d}\vec{x}=\frac{2}{6}$ \\
      $(2,3,1)$ & $\int\limits_{\vec{x}\in[0,1]^3} \left(\frac{6}{6}-\frac{x_2^2+4}{6}+\frac{x_2^2}{6}-\frac{0}{6}\right)\,\operatorname{d}\vec{x}=\frac{2}{6}$ \\
      $(1,2,3)$ & $\int\limits_{\vec{x}\in[0,1]^3} \left(\frac{x_1^2+5}{6}-\frac{x_1^2+2x_2^2+3}{6}+\frac{x_1^2+2x_2^2}{6}-\frac{x_1^2}{6}\right)\,\operatorname{d}\vec{x}=\frac{2}{6}$ \\
      $(3,2,1)$ & $\int\limits_{\vec{x}\in[0,1]^3} \left(\frac{3x_3^2+3}{6}-\frac{2x_2^2+3x_3^2+1}{6}+\frac{2x_2^2+3x_3^2}{6}-\frac{3x_3^2}{6}\right)\,\operatorname{d}\vec{x}=\frac{2}{6}$ \\
      $(1,3,2)$ & $\int\limits_{\vec{x}\in[0,1]^3} \left(\frac{x_1^2+3x_3^2+2}{6}-\frac{x_1^2+2x_2^2+3x_3^2}{6}+\frac{x_1^2+2x_2^2+3x_3^2}{6}-\frac{x_1^2+3x_3^2}{6}\right)\,\operatorname{d}\vec{x}=\frac{2}{6}$ \\
      $(3,1,2)$ & $\int\limits_{\vec{x}\in[0,1]^3} \left(\frac{x_1^2+3x_3^2+2}{6}-\frac{x_1^2+2x_2^2+3x_3^2}{6}+\frac{x_1^2+2x_2^2+3x_3^2}{6}-\frac{x_1^2+3x_3^2}{6}\right)\,\operatorname{d}\vec{x}=\frac{2}{6}$ \\
      \hline
    \end{tabular}
    \caption{Details for $\varphi_2(\hat{f})$.}
    \label{table_ex_1_ssi_2}
  \end{center}
\end{table}

\begin{table}[htp]
  \begin{center}
    \begin{tabular}{cc}
      \hline
      $\pi\in\mathcal{S}_3$ & $3$-fold integral\\
      \hline
      $(3,1,2)$ & $\int\limits_{\vec{x}\in[0,1]^3} \left(\frac{6}{6}-\frac{3x_3^2+3}{6}+\frac{3x_3^2}{6}-\frac{0}{6}\right)\,\operatorname{d}\vec{x}=\frac{3}{6}$ \\
      $(3,2,1)$ & $\int\limits_{\vec{x}\in[0,1]^3} \left(\frac{6}{6}-\frac{3x_3^2+3}{6}+\frac{3x_3^2}{6}-\frac{0}{6}\right)\,\operatorname{d}\vec{x}=\frac{3}{6}$ \\
      $(1,3,2)$ & $\int\limits_{\vec{x}\in[0,1]^3} \left(\frac{x_1^2+5}{6}-\frac{x_1^2+3x_3^2+2}{6}+\frac{x_1^2+3x_3^2}{6}-\frac{x_1^2}{6}\right)\,\operatorname{d}\vec{x}=\frac{3}{6}$ \\
      $(2,3,1)$ & $\int\limits_{\vec{x}\in[0,1]^3} \left(\frac{2x_2^2+4}{6}-\frac{2x_2^2+3x_3^2+1}{6}+\frac{2x_2^2+3x_3^2}{6}-\frac{2x_2^2}{6}\right)\,\operatorname{d}\vec{x}=\frac{3}{6}$ \\
      $(1,2,3)$ & $\int\limits_{\vec{x}\in[0,1]^3} \left(\frac{x_1^2+2x_2^2+3}{6}-\frac{x_1^2+2x_2^2+3x_3^2}{6}+\frac{x_1^2+2x_2^2+3x_3^2}{6}-\frac{x_1^2+2x_2^2}{6}\right)\,\operatorname{d}\vec{x}=\frac{3}{6}$ \\
      $(2,1,3)$ & $\int\limits_{\vec{x}\in[0,1]^3} \left(\frac{x_1^2+2x_2^2+3}{6}-\frac{x_1^2+2x_2^2+3x_3^2}{6}+\frac{x_1^2+2x_2^2+3x_3^2}{6}-\frac{x_1^2+2x_2^2}{6}\right)\,\operatorname{d}\vec{x}=\frac{3}{6}$ \\
      \hline
    \end{tabular}
    \caption{Details for $\varphi_3(\hat{f})$.}
    \label{table_ex_1_ssi_3}
  \end{center}
\end{table}
 
In tables~\ref{table_ex_1_ssi_1}-\ref{table_ex_1_ssi_3} we give the respective summands for each permutation $\pi\in\mathcal{S}_3$ for 
$\hat{f}$. Summarizing the results we obtain 
\begin{equation}
  \varphi(\hat{f})=\left(\frac{1}{6},\frac{2}{6},\frac{3}{6}\right).
\end{equation}   
Note that $\varphi$ is efficient in our example, which we will prove in general in Lemma~\ref{lemma_ssi_properties}. 
Moreover, the entries of $\varphi(\hat{f})$ coincide with the coefficients of the linear combination of functions defining $\hat{f}$. 
Again, this is a general phenomenon, see Theorem~\ref{thm_linear_combination}.

\begin{table}[htp]
  \begin{center}
    \begin{tabular}{cc}
      \hline
      $\pi\in\mathcal{S}_3$ & $3$-fold integral\\
      \hline
      $(1,2,3)$ & $\int\limits_{\vec{x}\in[0,1]^3} \left(1-x_1+0-0\right)\,\operatorname{d}\vec{x}=\frac{1}{2}$ \\
      $(1,3,2)$ & $\int\limits_{\vec{x}\in[0,1]^3} \left(1-x_1+0-0\right)\,\operatorname{d}\vec{x}=\frac{1}{2}$ \\
      $(2,1,3)$ & $\int\limits_{\vec{x}\in[0,1]^3} \left(x_2^2-x_1x_2^2+0-0\right)\,\operatorname{d}\vec{x}=\frac{1}{6}$ \\
      $(2,3,1)$ & $\int\limits_{\vec{x}\in[0,1]^3} \left(x_3^3-x_1x_3^3+0-0\right)\,\operatorname{d}\vec{x}=\frac{1}{8}$ \\
      $(3,1,2)$ & $\int\limits_{\vec{x}\in[0,1]^3} \left(x_2^2x_3^3-x_1x_2^2x_3^3+x_1x_2^2x_3^3-0\right)\,\operatorname{d}\vec{x}=\frac{1}{12}$ \\
      $(3,2,1)$ & $\int\limits_{\vec{x}\in[0,1]^3} \left(x_2^2x_3^3-x_1x_2^2x_3^3+x_1x_2^2x_3^3-0\right)\,\operatorname{d}\vec{x}=\frac{1}{12}$ \\
      \hline
    \end{tabular}
    \caption{Details for $\varphi_1(\tilde{f})$.}
    \label{table_ex_2_ssi_1}
  \end{center}
\end{table}

\begin{table}[htp]
  \begin{center}
    \begin{tabular}{cc}
      \hline
      $\pi\in\mathcal{S}_3$ & $3$-fold integral\\
      \hline
      $(1,2,3)$ & $\int\limits_{\vec{x}\in[0,1]^3} \left(x_1-x_1x_2^2+0-0\right)\,\operatorname{d}\vec{x}=\frac{1}{3}$ \\
      $(1,3,2)$ & $\int\limits_{\vec{x}\in[0,1]^3} \left(x_1x_3^3-x_1x_2^2x_3^3+x_1x_2^2x_3^3-0\right)\,\operatorname{d}\vec{x}=\frac{1}{8}$ \\
      $(2,1,3)$ & $\int\limits_{\vec{x}\in[0,1]^3} \left(1-x_2^2+0-0\right)\,\operatorname{d}\vec{x}=\frac{2}{3}$ \\
      $(2,3,1)$ & $\int\limits_{\vec{x}\in[0,1]^3} \left(1-x_2^2+0-0\right)\,\operatorname{d}\vec{x}=\frac{2}{3}$ \\
      $(3,1,2)$ & $\int\limits_{\vec{x}\in[0,1]^3} \left(x_1x_3^3-x_1x_2^2x_3^3+x_1x_2^2x_3^3-0\right)\,\operatorname{d}\vec{x}=\frac{1}{8}$ \\
      $(3,2,1)$ & $\int\limits_{\vec{x}\in[0,1]^3} \left(x_3^3-x_2^2x_3^3+0-0\right)\,\operatorname{d}\vec{x}=\frac{1}{6}$ \\
      \hline
    \end{tabular}
    \caption{Details for $\varphi_2(\tilde{f})$.}
    \label{table_ex_2_ssi_2}
  \end{center}
\end{table}

\begin{table}[htp]
  \begin{center}
    \begin{tabular}{cc}
      \hline
      $\pi\in\mathcal{S}_3$ & $3$-fold integral\\
      \hline
      $(1,2,3)$ & $\int\limits_{\vec{x}\in[0,1]^3} \left(x_1x_2^2-x_1x_2^2x_3^3+x_1x_2^2x_3^3-0\right)\,\operatorname{d}\vec{x}=\frac{1}{6}$ \\
      $(1,3,2)$ & $\int\limits_{\vec{x}\in[0,1]^3} \left(x_1-x_1x_3^3+0-0\right)\,\operatorname{d}\vec{x}=\frac{3}{8}$ \\
      $(2,1,3)$ & $\int\limits_{\vec{x}\in[0,1]^3} \left(x_1x_2^2-x_1x_2^2x_3^3+x_1x_2^2x_3^3-0\right)\,\operatorname{d}\vec{x}=\frac{1}{6}$ \\
      $(2,3,1)$ & $\int\limits_{\vec{x}\in[0,1]^3} \left(x_2^2-x_2^2x_3^3+0-0\right)\,\operatorname{d}\vec{x}=\frac{1}{4}$ \\
      $(3,1,2)$ & $\int\limits_{\vec{x}\in[0,1]^3} \left(1-x_3^3+0-0\right)\,\operatorname{d}\vec{x}=\frac{3}{4}$ \\
      $(3,2,1)$ & $\int\limits_{\vec{x}\in[0,1]^3} \left(1-x_3^3+0-0\right)\,\operatorname{d}\vec{x}=\frac{3}{4}$ \\      
      \hline
    \end{tabular}
    \caption{Details for $\varphi_3(\tilde{f})$.}
    \label{table_ex_2_ssi_3}
  \end{center}
\end{table}

In tables~\ref{table_ex_2_ssi_1}-\ref{table_ex_2_ssi_3} we give the respective summands for each permutation
$\pi\in\mathcal{S}_3$ for $\tilde{f}$. Summarizing the results we obtain 
\begin{equation}
  \varphi(\tilde{f})=\left(\frac{35}{144},\frac{50}{144},\frac{59}{144}\right)
  =\left(0.2430\overline{5},0.347\overline{2},0.4097\overline{2}\right).
\end{equation}
In Theorem~\ref{thm_power_product} we state a more general formula for $\varphi\left(x_1^{\alpha_1}x_2^{\alpha_2}\dots x_n^{\alpha_n}\right)$
where $\alpha_i\in\mathbb{R}_{>0}$.

For the generalization of the Penrose-Banzhaf index we stick to the strategic point of view outlined in Section~\ref{sec_revisited}. 
So, for a given simple aggregation function $v$ and a state vector $\vec{x}$ agent~$i$ increases the value of $f(\vec{x})$ maximally by  
choosing $x_i=1$, due to monotonicity of $f$. Similarly, the minimum is attained for $x_i=0$.
\begin{definition}
  \label{def_bzi} (Cf.\ Definition 10.29 in \cite{grabischaggregation})
  For a simple aggregation function $f$ for $n\ge 2$ agents we set
  \begin{eqnarray}
    \widetilde{\psi}_i(f)&:=& \int_0^1\dots\int_0^1 \Big(f(\vec{x}_{-i},\vec{1}_{i})
    -f(\vec{x}_{-i},\vec{0}_{i})\Big)\operatorname{d}x_1\dots\operatorname{d}x_n\\
    &=& \int_0^1\dots\int_0^1 \Big(f(\vec{x}_{-i},\vec{1}_{i})
    -f(\vec{x}_{-i},\vec{0}_{i})\Big)\operatorname{d}x_1\dots\operatorname{d}x_{i-1}\, \operatorname{d}x_{i+1}\dots\operatorname{d}x_n\nonumber.
  \end{eqnarray}
  for each agent $i\in N$. With this, we normalize to $\psi_i(f)=\widetilde{\psi}_i(f) / \sum_{j=1}^n \widetilde{\psi}_j(f)$. 
\end{definition}
Here we again assume that all state vectors $\vec{x}$ occur with equal probability, which can of course be adjusted easily. 
As an example we consider the same two simple aggregation functions 
$\hat{f}(x_1,x_2,x_3)=\frac{1x_1^2+2x_2^2+3x_3^2}{6}$ and $\tilde{f}(x_1,x_2,x_3)=x_1x_2^2x_3^3$ as before. Here, we obtain:
\begin{eqnarray}
  \widetilde{\psi}_1(\hat{f}) &=& \int_0^1\int_0^1 \left(\frac{1+2x_2^2+3x_3^3}{6}-\frac{0+2x_2^2+3x_3^3}{6}\right)\operatorname{d}x_2\,\operatorname{d}x_3=\frac{1}{6},\nonumber\\
  \widetilde{\psi}_2(\hat{f}) &=& \int_0^1\int_0^1 \left(\frac{1x_1+2+3x_3^3}{6}-\frac{1x_1^1+0+3x_3^3}{6}\right)\operatorname{d}x_1\,\operatorname{d}x_3=\frac{2}{6},\nonumber\\
  \widetilde{\psi}_3(\hat{f}) &=& \int_0^1\int_0^1 \left(\frac{1x_1^2+2x_2^2+3}{6}-\frac{1x_1^2+2x_2^2+0}{6}\right)\operatorname{d}x_1\,\operatorname{d}x_2=\frac{3}{6}.
\end{eqnarray}
Since $\widetilde{\psi}_1(\hat{f})+\widetilde{\psi}_2(\hat{f})+\widetilde{\psi}_3(\hat{f})=1$ no normalization
is necessary. Moreover, the example is covered by Theorem~\ref{thm_linear_combination}. For the other example we obtain:
\begin{eqnarray}
  \widetilde{\psi}_1(\tilde{f}) &=& \int_0^1\int_0^1 \left(x_2^2x_3^3-0\right)\operatorname{d}x_2\,\operatorname{d}x_3=\frac{1}{12},\nonumber\\
  \widetilde{\psi}_2(\tilde{f}) &=& \int_0^1\int_0^1 \left(x_1x_3^3-0\right)\operatorname{d}x_1\,\operatorname{d}x_3=\frac{1}{8},\nonumber\\
  \widetilde{\psi}_3(\tilde{f}) &=& \int_0^1\int_0^1 \left(x_1x_2^2-0\right)\operatorname{d}x_1\,\operatorname{d}x_2=\frac{1}{6}.
\end{eqnarray}
After normalization we obtain $\psi=\frac{1}{9}\cdot\left(2,3,4\right)=\left(0.\overline{2},0.\overline{3},0.\overline{4}\right)$. In 
Theorem~\ref{thm_power_product} we capture this example as a special case.
 
We remark that several results for the importance measure from Definition~\ref{def_bzi} are known, see e.g.\ 
\cite[Section~10.3]{grabischaggregation}. The \emph{dual aggregation function} $f^d(\vec{x}):=1-f(1-x_1,\dots,1-x_n)$ 
satisfies $\widetilde{\psi}_i(f)=\widetilde{\psi}_i(f^d)$ and $\psi_i(f)=\psi_i(f^d)$ for all $1\le i\le n$, see e.g.\ 
\cite[Proposition~10.30]{grabischaggregation}. 
Starting from a simple game $v$, in Section~\ref{sec_revisited} we have written the Shapley-Shubik and the Penrose-Banzhaf index in the 
shape of Definition~\ref{def_ssi} and Definition~\ref{def_bzi}, respectively. However, the associated function $\tilde{v}$ is not a 
simple aggregation function since it is only defined for the domain $\{0,1\}^n$. If we apply the Choquet integral to $v$ we obtain 
an aggregation function $\hat{v}$ such that $\psi(\hat{v})$ equals the Shapley-Shubik index of $v$, see 
\cite[Proposition 10.31]{grabischaggregation}. If the Choquet integral is replaced with the multilinear extension of Owen \cite{owen1972multilinear} 
and plugged into $\psi$, then we end up with the Penrose-Banzhaf index of $v$, see \cite[Proposition 10.34]{grabischaggregation}.

\section{Properties of the two importance measures}
\label{sec_properties}
First, we have to verify that the two mappings $\varphi$ and $\psi$ (see Definition~\ref{def_ssi} and Definition~\ref{def_bzi}) are indeed 
importance measures, i.e., that they satisfy the conditions of Definition~\ref{def_importance_measure}. 

\begin{lemma}
  \label{lemma_ssi_properties}
  For a positive integer $n$ the mapping $\varphi$ is a well-defined importance measure that satisfies the transfer axiom.
\end{lemma}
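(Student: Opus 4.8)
The plan is to verify the four claims in turn: well-definedness (existence of the integrals), efficiency, symmetry, the null property, and finally the transfer axiom.

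First, well-definedness. Since $f$ is continuous on the compact cube $[0,1]^n$, each of the four functions appearing in the integrand, e.g. $\vec{x}\mapsto f(\upim)$, is a composition of $f$ with a continuous (in fact affine in each block of coordinates) map, hence continuous and bounded on $[0,1]^n$; therefore the iterated integral exists and is finite, and the finite sum over $\pi\in\mathcal{S}_n$ poses no problem. Monotonicity guarantees $f(\upim)-f(\dnim)\ge 0$ and likewise for the {\lq\lq}after{\rq\rq} term, and since $\pi_{<i}\subseteq\pi_{\le i}$ one has $f(\dnim)\le f(\dni)\le f(\upi)\le f(\upim)$, so each permutation's integrand is nonnegative; thus $\varphi_i(f)\ge 0$ and $\varphi$ indeed maps into $\mathbb{R}_{\ge 0}^n$.

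Next, efficiency. The key observation is a telescoping identity: fix $\pi$ and write the agents in the order $\pi^{-1}(1),\dots,\pi^{-1}(n)$. For a fixed $\vec{x}$, the quantity {\lq\lq}uncertainty before agent $i$ announces{\rq\rq} minus {\lq\lq}uncertainty after{\rq\rq}, summed over all $i$, telescopes. Concretely, for the agent $k$ with $\pi(k)=\ell$, the {\lq\lq}after{\rq\rq} interval $[f(\dni[k]),f(\upi[k])]$ coincides with the {\lq\lq}before{\rq\rq} interval of the agent in position $\ell+1$; at position $1$ the {\lq\lq}before{\rq\rq} interval is $[f(\vec 0),f(\vec 1)]=[0,1]$ of length $1$ by surjectivity and monotonicity, and at position $n$ the {\lq\lq}after{\rq\rq} interval is the single point $\{f(\vec x)\}$ of length $0$. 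Hence $\sum_{i\in N}\big[(f(\upim)-f(\dnim))-(f(\upi)-f(\dni))\big]=1-0=1$ for every $\pi$ and every $\vec x$. Integrating (the constant $1$) over $[0,1]^n$ and averaging over the $n!$ permutations gives $\sum_{i=1}^n\varphi_i(f)=1$. Symmetry is immediate from the change of variables swapping $x_i\leftrightarrow x_j$ together with relabelling $\pi$ by the transposition $(i\,j)$: this is a measure-preserving bijection of $\mathcal{S}_n\times[0,1]^n$ carrying the integrand for $i$ to that for $j$ when $i,j$ are equivalent. The null property is also direct: if $i$ is a null then $f$ does not depend on $x_i$, so $f(\upim)=f(\dnim)$ and $f(\upi)=f(\dni)$ pointwise (the {\lq\lq}before{\rq\rq} and {\lq\lq}after{\rq\rq} configurations differ only in coordinate $i$), making every summand zero.

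Finally, the transfer axiom. Fix $\pi$, $i$, and $\vec x$, and abbreviate the four arguments as $U^-=(\upim)$, $D^-=(\dnim)$, $U=(\upi)$, $D=(\dni)$. The claim reduces, after summing over $\pi$ and integrating, to the pointwise identity $h(U^-)-h(D^-)-h(U)+h(D)$ being additive when $h$ runs over $f$, $g$, $f\vee g$, $f\wedge g$, i.e.\ $[\,f\vee g+f\wedge g\,]$ evaluated at the four points gives the same combination as $f+g$ does. But $f\vee g+f\wedge g=f+g$ as functions (since $\max\{a,b\}+\min\{a,b\}=a+b$), so the pointwise identity is trivial, and the transfer axiom follows by linearity of the sum and the integral. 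I expect the main obstacle to be stating the telescoping in efficiency cleanly — one must be careful that {\lq\lq}after agent in position $\ell${\rq\rq} and {\lq\lq}before agent in position $\ell+1${\rq\rq} are literally the same pair of sets $\pi_{\le \pi^{-1}(\ell)}=\pi_{<\pi^{-1}(\ell+1)}$, so the endpoints really cancel — everything else is bookkeeping with the definitions of Section~\ref{sec_model} and the interchange of finite sums with the (absolutely convergent, bounded) integral.
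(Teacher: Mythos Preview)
Your argument is correct and follows essentially the same route as the paper: continuity on a compact cube for well-definedness, the telescoping over positions for efficiency, the transposition trick for symmetry, and the pointwise identity $\max+\min=\,$sum for transfer. One slip to fix in the null-property step: the equalities you need are $f(\upim)=f(\upi)$ and $f(\dnim)=f(\dni)$ (each pair differs only in coordinate $i$, as your parenthetical correctly notes), not $f(\upim)=f(\dnim)$ and $f(\upi)=f(\dni)$, which are false in general since those pairs differ on all of $\pi_{\ge i}$ respectively $\pi_{>i}$.
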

\begin{proof}
  Every argument $f$ of $\varphi$ is a simple aggregation function, which especially means that $f$ is continuous 
  over the compact domain $[0,1]^n$. Thus, the integrals in the definition of $\varphi(f)$ exist, so that the mapping 
  $\varphi$ is well-defined. Moreover, $f(\upim)\ge f(\upi)$ 
  and $f(\dni)\ge f(\dnim)$ for every $\pi\in\mathcal{S}_n$, 
  $\vec{x}\in [0,1]^n$, and $i\in N$, so that $\varphi_i(f)\ge 0$.
  
  For any permutation $\pi\in\mathcal{S}_n$ and any $0\le h\le n$ let $\pi|h:=\{\pi(i)\,:\,1\le i\le h\}$, i.e., 
  the first $h$ agents in ordering $\pi$. 
  Then, for any state vector $\vec{x}\in[0,1]^n$, we have
  \begin{eqnarray}
    &&\sum_{i=1}^n f(\upim)-f(\upi)+f(\dni)-f(\dnim)\nonumber\\
    &=&\sum_{h=1}^n f(\vec{x}_{\pi|h-1},\vec{1}_{-\pi|h-1})\!-\! f(\vec{x}_{\pi|h},\vec{1}_{-\pi|h})
    \!+\!\sum_{h=1}^n f(\vec{x}_{\pi|h},\vec{0}_{-\pi|h})\!-\!f(\vec{x}_{\pi|h-1},\vec{0}_{-\pi|h-1})\nonumber\\
    &=& f(\vec{x}_{\pi|0},\vec{1}_{-\pi|0})-f(\vec{x}_{\pi|n},\vec{1}_{-\pi|n})+f(\vec{x}_{\pi|n},\vec{0}_{-\pi|n})-
    f(\vec{x}_{\pi|0},\vec{0}_{-\pi|0})\nonumber\\
    &=& f(\vec{1})-f(\vec{x})+f(\vec{x})-f(\vec{0})=1-0=1 
  \end{eqnarray}
  so that $\sum_{i=1}^n \varphi_i(f)=1$, i.e., $\varphi$ is efficient.
  
  For two distinct symmetric agents $i,j\in N$ let $\sigma\in\mathcal{S}_n$ be the transposition interchanging 
  agent~$i$ and agent~$j$. For a given state vector $\vec{x}\in[0,1]^n$ we define $\vec{y}\in[0,1]^n$ as the 
  vector arising from interchanging the $i$th and the $j$th coordinate of $\vec{x}$. By $\kappa\in\mathcal{S}_n$ 
  we denote the concatenation of $\sigma$ with $\pi$. With this, we have
  \begin{eqnarray*}
  &&\sum_{\pi\in\mathcal{S}_n} f(\upim)-f(\upi)+f(\dni)-f(\dnim)\\
  &=&\sum_{\kappa\in\mathcal{S}_n} f(\upimk)-f(\upik)+f(\dnik)-f(\dnimk)\\
  &=&\sum_{\kappa\in\mathcal{S}_n} f(\upimky)-f(\upiky)+f(\dniky)-f(\dnimky)\\
  &=&\sum_{\kappa\in\mathcal{S}_n} f(\upimkx)-f(\upikx)+f(\dnikx)-f(\dnimkx)\\
  &=&\sum_{\pi\in\mathcal{S}_n} f(\upjm)-f(\upj)+f(\dnj)-f(\dnjm)
  \end{eqnarray*}  
  so that $\varphi_i(f)=\varphi_j(f)$, i.e., $\varphi$ is symmetric.
  
  If agent $i\in N$ is a null and $\pi\in\mathcal{S}_n$ arbitrary, then 
  $f(\dnim)=f(\dni)$ and
  $f(\upim)=f(\upi)$, so that $\varphi_i(f)=0$, 
  i.e., $\varphi$ satisfies the null property.
  
  Since $x+y=\max\{x,y\}+\min\{x,y\}$ for all $x,y\in\mathbb{R}$ and due to the linearity of finite sums and integrals, $\varphi$ satisfies 
  the transfer axiom.
\end{proof}

\begin{lemma}
  \label{lemma_bzi_properties}
  For a positive integer $n\ge 2$ the mapping $\psi$ is a well-defined importance measure. Moreover, $\widetilde{\psi}$ satisfies 
  the transfer axiom.
\end{lemma}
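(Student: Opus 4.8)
The plan is to mirror the proof of Lemma~\ref{lemma_ssi_properties}, since the definition of $\widetilde{\psi}$ is structurally simpler (no sum over permutations) and the properties to check are the same three of Definition~\ref{def_importance_measure} together with the transfer axiom for $\widetilde{\psi}$. First I would argue well-definedness and non-negativity: for a simple aggregation function $f$, continuity on the compact box $[0,1]^{n}$ guarantees the integral in Definition~\ref{def_bzi} exists, and weak monotonicity gives $f(\vec{x}_{-i},\vec{1}_i)\ge f(\vec{x}_{-i},\vec{0}_i)$ pointwise, whence $\widetilde{\psi}_i(f)\ge 0$. To make the normalization legitimate I also need $\sum_{j=1}^{n}\widetilde{\psi}_j(f)>0$, which follows because $f$ is surjective and monotone: $f(\vec{1})=1$ and $f(\vec{0})=0$, so $f$ is not constant, hence some coordinate strictly moves the value on a set of positive measure; more carefully, by a telescoping argument $\sum_{j=1}^{n}\bigl(f(\vec{x}_{<j},\vec 1_{\ge j})-f(\vec{x}_{\le j},\vec 1_{>j})\bigr)$ along any coordinate order integrates to $f(\vec 1)-f(\vec 0)=1$, and each summand is dominated by the corresponding $f(\vec{x}_{-j},\vec 1_j)-f(\vec{x}_{-j},\vec 0_j)$, so $\sum_j\widetilde\psi_j(f)\ge 1>0$. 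This simultaneously shows $\psi$ is well-defined and that the normalizing sum is at least $1$.

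Next I would check the three axioms for the normalized $\psi$. Efficiency is immediate from the definition $\psi_i(f)=\widetilde\psi_i(f)/\sum_j\widetilde\psi_j(f)$. For symmetry, given equivalent agents $i$ and $j$ and a state vector $\vec{x}$, let $\vec{y}$ be $\vec{x}$ with coordinates $i$ and $j$ swapped; the substitution $\vec x\mapsto\vec y$ is a measure-preserving bijection of $[0,1]^{n}$ and, using Definition~\ref{def_agent_types}(iii), it carries the integrand defining $\widetilde\psi_i(f)$ to that defining $\widetilde\psi_j(f)$, so $\widetilde\psi_i(f)=\widetilde\psi_j(f)$ and hence $\psi_i(f)=\psi_j(f)$. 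For the null property, if $i$ is a null then $f(\vec{x}_{-i},\vec 1_i)=f(\vec{x}_{-i},\vec 0_i)$ pointwise by Definition~\ref{def_agent_types}(i), so $\widetilde\psi_i(f)=0$ and thus $\psi_i(f)=0$.

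Finally, the transfer axiom for $\widetilde\psi$ follows exactly as for $\varphi$: from $x+y=\max\{x,y\}+\min\{x,y\}$ applied pointwise to $f(\vec x)$ and $g(\vec x)$, together with linearity of the integral, one gets $\widetilde\psi_i(f)+\widetilde\psi_i(g)=\widetilde\psi_i(f\vee g)+\widetilde\psi_i(f\wedge g)$; here one uses the preceding lemma that $f\vee g$ and $f\wedge g$ are again simple aggregation functions so that the quantities are defined. I expect the only genuinely non-routine point to be the lower bound $\sum_j\widetilde\psi_j(f)>0$ needed to justify dividing — everything else is a direct transcription of the $\varphi$ argument with the permutation sum collapsed to a single term — and I would handle it via the telescoping/domination estimate sketched above rather than any compactness or covering argument. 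I would also note explicitly that the transfer axiom is claimed only for the unnormalized $\widetilde\psi$, since normalization destroys additivity in general.
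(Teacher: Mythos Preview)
Your treatment of well-definedness, non-negativity, symmetry, the null property, and the transfer axiom for $\widetilde{\psi}$ is correct and matches the paper almost verbatim. The one genuine gap is precisely the step you flagged as non-routine: showing $\sum_{j}\widetilde{\psi}_j(f)>0$ so that the normalization defining $\psi$ is legitimate.

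Your telescoping/domination argument does not work. First, the telescope you wrote down,
\[
\sum_{j=1}^{n}\Bigl(f(\vec{x}_{<j},\vec 1_{\ge j})-f(\vec{x}_{\le j},\vec 1_{>j})\Bigr)=f(\vec 1)-f(\vec x)=1-f(\vec x),
\]
does not integrate to $1$ but to $1-\int f$. Second, and more seriously, the claimed pointwise domination of the $j$th summand by $f(\vec{x}_{-j},\vec 1_j)-f(\vec{x}_{-j},\vec 0_j)$ is false: the two expressions are evaluated at different values of the coordinates $x_{j+1},\dots,x_n$, and the map $\vec{x}_{-j}\mapsto f(\vec{x}_{-j},\vec 1_j)-f(\vec{x}_{-j},\vec 0_j)$ need not be monotone. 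In fact the conclusion $\sum_j\widetilde{\psi}_j(f)\ge 1$ you aim for is simply wrong: for $n=2$ and $f(x_1,x_2)=x_1^2x_2^2$ one computes $\widetilde{\psi}_1(f)=\widetilde{\psi}_2(f)=\tfrac{1}{3}$, so the sum is $\tfrac{2}{3}<1$.

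The paper establishes positivity by exactly the kind of argument you set aside: from $f(\vec 0)=0$, $f(\vec 1)=1$ and monotonicity one finds some agent $i$ and some $\vec{x}$ with $\varepsilon:=f(\vec{x}_{-i},\vec 1_i)-f(\vec{x}_{-i},\vec 0_i)>0$; continuity then gives a $\delta$-box in $[0,1]^{n-1}$ on which the integrand is at least $\varepsilon/2$, hence $\widetilde{\psi}_i(f)\ge \delta^{n-1}\varepsilon/2>0$. You should replace your telescoping paragraph with this continuity argument (or, equivalently, observe that a nonnegative continuous integrand which is not identically zero has strictly positive integral).
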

\begin{proof}
  Every argument $f$ of $\widetilde{\psi}$ and $\psi$ is a simple aggregation function, which especially means that $f$ 
  is continuous over the compact domain $[0,1]^n$. Thus, the integrals in the definition of $\widetilde{\psi}(f)$ exist, 
  so that the mapping $\widetilde{\psi}$ is well-defined. Since $f(\vec{x}_{-i},\vec{1}_{i})\ge 
  f(\vec{x}_{-i},\vec{0}_{i})$ for every $i\in N$, due to monotonicity, we have $\widetilde{\psi}_i(f)\ge 0$.
  
  Since $f$ is weakly monotonic increasing, $f(\vec{0})=0$, and $f(\vec{1})=1$ there exists a state vector $\vec{x}\in[0,1]^n$ 
  and an agent~$i\in N$ such that $$\varepsilon:=f(\vec{x}_{-i},\vec{1}_{i})-f(\vec{x}_{-i},\vec{0}_{i})>0.$$  
  Due to continuity, there exists a constant $0<\delta<\tfrac{1}{2}$ such that 
  $f(\vec{x'}_{-i},\vec{1}_{i})-f(\vec{x'}_{-i},\vec{0}_{i})\ge\varepsilon/2$ for all 
  $l_h:=\max\{0,x_h-\delta\}\le x'_h\le \min\{1,x_h+\delta\}=:u_h$ and all $h\in N\backslash\{i\}$. Since $u_h-l_h\ge\delta$ we have
  $$
    \widetilde{\psi}_i(f)\ge \int_{l_0}^{u_1}\dots\int_{l_{i-1}}^{u_{i-1}}\int_{l_{i+1}}^{u_{i+1}}\dots\int_{l_{n}}^{u_{n}} 
    \Big(f(\vec{x}_{-i},\vec{1}_{i})- f(\vec{x}_{-i},\vec{0}_{i})\Big)
    \operatorname{d}x_1\dots\operatorname{d}x_{i-1}\, \operatorname{d}x_{i+1}\dots\operatorname{d}x_n,
  $$  
  so that $\widetilde{\psi}_i(f)\ge \frac{\delta^{n-1}\varepsilon}{2}>0$, $\sum_{h=1}^n \widetilde{\psi}_h(f)>0$, 
  and $\sum_{h=1}^n \psi_h(f)=1$, i.e., $\psi$ is efficient.
  
  Let $i,j\in N$ be distinct symmetric agents and $\vec{x}\in[0,1]^n$ with $x_i=x_j$. Symmetry of $\widetilde{\psi}$ and $\psi$ 
  follows from $f(\vec{x}_{-i},\vec{1}_{i})-f(\vec{x}_{-i},\vec{0}_{i})=f(\vec{x}_{-j},\vec{1}_{j})-f(\vec{x}_{-j},\vec{0}_{j})$.   
  
  Since $f(\vec{x}_{-i},\vec{1}_{i})=f(\vec{x}_{-i},\vec{0}_{i})$ for every null $i\in N$, $\widetilde{\psi}$ 
  and $\psi$ satisfy the null property.
  
  Since $x+y=\max\{x,y\}+\min\{x,y\}$ for all $x,y\in\mathbb{R}$ and due to the linearity of finite sums and integrals, $\widetilde{\psi}$ 
  satisfies the transfer axiom.
\end{proof}

Next, we show that the examples $\hat{f}(x_1,x_2,x_3)=\frac{1x_1^2+2x_2^2+3x_3^2}{6}$ and $\tilde{f}(x_1,x_2,x_3)=x_1x_2^2x_3^3$ from 
Section~\ref{sec_importance_measures} both are part of more general families for which formulas for $\varphi$ and $\psi$ can be determined.

\begin{theorem}
  (Cf.\ \cite[Table  10.4]{grabischaggregation})
  \label{thm_linear_combination}
  For a positive integer $n$, $\vec{w}\in\mathbb{R}_{\ge 0}^n$ with $\sum_{i=1}^n w_i=1$, and surjective, continuous, 
  weakly monotonic increasing functions $f_i\colon[0,1]\to[0,1]$ for $1\le i\le n$, let $f\colon[0,1]^n\to[0,1]$ defined 
  by $\vec{x}\mapsto \sum_{i=1}^n w_i\cdot f_i(x_i)$. With this, $f$ is a simple aggregation function and 
  $\varphi_i(f)=\psi_i(f)=w_i$ for all $1\le i\le n$.
\end{theorem}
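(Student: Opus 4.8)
The plan is to handle the three assertions in turn; the essential point is that additive separability, $f(\vec x)=\sum_{i=1}^n w_i f_i(x_i)$, makes all cross terms cancel, so that the integrands occurring in Definitions~\ref{def_ssi} and~\ref{def_bzi} collapse to constants.

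First I would check that $f$ is a simple aggregation function. Continuity and weak monotonicity are immediate, since $f$ is a finite nonnegative combination of the continuous, weakly monotonic increasing maps $x_i\mapsto f_i(x_i)$. For surjectivity, note that a weakly monotonic increasing surjection $f_i\colon[0,1]\to[0,1]$ necessarily satisfies $f_i(0)=0$ and $f_i(1)=1$: if $f_i(a)=0$ then $f_i(0)\le f_i(a)=0$, and symmetrically at the upper end. Hence $f(\vec 0)=0$ and $f(\vec 1)=1$, and by continuity the intermediate value theorem applied along the segment from $\vec 0$ to $\vec 1$ shows that $f$ attains every value in $[0,1]$.

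For $\psi$ I would evaluate the integrand of Definition~\ref{def_bzi} directly: only the $i$th summand of $f$ depends on $x_i$, so $f(\vec{x}_{-i},\vec{1}_{i})-f(\vec{x}_{-i},\vec{0}_{i})=w_i\bigl(f_i(1)-f_i(0)\bigr)=w_i$, a constant in the remaining variables. Integrating the constant $w_i$ over $[0,1]^{n-1}$ gives $\widetilde{\psi}_i(f)=w_i$, whence $\sum_{j=1}^n\widetilde{\psi}_j(f)=\sum_{j=1}^n w_j=1$ and $\psi_i(f)=w_i$.

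For $\varphi$ I would fix $\pi\in\mathcal{S}_n$ and $\vec x\in[0,1]^n$ and compute the bracketed difference of Definition~\ref{def_ssi}. In the term $f(\upim)-f(\dnim)$ the contribution $\sum_{j:\,\pi(j)<\pi(i)}w_j f_j(x_j)$ occurs in both $f(\upim)$ and $f(\dnim)$ and cancels; since $f_j(0)=0$ and $f_j(1)=1$, what remains is the constant $\sum_{j:\,\pi(j)\ge\pi(i)}w_j$. By the same bookkeeping, $f(\upi)-f(\dni)$ reduces to the constant $\sum_{j:\,\pi(j)>\pi(i)}w_j$. Hence the integrand equals $\sum_{j:\,\pi(j)\ge\pi(i)}w_j-\sum_{j:\,\pi(j)>\pi(i)}w_j=w_i$, independently of both $\pi$ and $\vec x$; averaging this constant over the $n!$ permutations and integrating over $[0,1]^n$ yields $\varphi_i(f)=w_i$. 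I do not anticipate any real obstacle: the only points requiring a moment's care are the deduction of $f_i(0)=0$ and $f_i(1)=1$ from surjectivity together with monotonicity (used both for surjectivity of $f$ and for the constancy of the integrands), and keeping track of which coordinates are set to $0$ or $1$ in the $\varphi$ computation; one could alternatively derive the $\varphi$ value from efficiency and symmetry when the $f_i$ coincide, but the direct computation covers the general asymmetric case at once.
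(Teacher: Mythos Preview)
Your proposal is correct and follows essentially the same approach as the paper: both verify the simple aggregation function conditions directly, then compute that the integrand in Definition~\ref{def_bzi} reduces to the constant $w_i$, and that for each permutation the bracketed expression in Definition~\ref{def_ssi} likewise collapses to $w_i$. The only cosmetic difference is in the surjectivity argument---you invoke the intermediate value theorem along the diagonal from $\vec 0$ to $\vec 1$, whereas the paper covers $[0,1]$ by the images of the edge-paths between consecutive corners $(1,\dots,1,0,\dots,0)$---but both are immediate once $f_i(0)=0$ and $f_i(1)=1$ are in hand.
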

\begin{proof}
  By construction, $f$ is continuous. If $\vec{x}\le\vec{y}$ for two vectors $x,y\in[0,1]^n$, then $x_i\le y_i$ for all 
  $1\le i\le n$ so that $f_i(x_i)\le f_i(y_i)$ and $w_i\cdot f_i(x_i)\le w_i\cdot f_i(y_i)$, which implies that $f$ is 
  weakly monotonic increasing. For $1\le i\le n$ we define $\underline{\vec{x}}_i=(1,\dots,1,0,\dots,0)\in[0,1]^n$ with 
  $i-1$ leading ones and $\overline{\vec{x}}_i=(1,\dots,1,0,\dots,0)\in[0,1]^n$ with $i$ leading ones. Since the $f_i$ are 
  surjective, $f_i(0)=0$ and $f_i(1)=1$, the image of $\lambda\cdot \underline{\vec{x}}_i+(1-\lambda)\cdot \overline{\vec{x}}_i$ 
  for $\lambda\in[0,1]$ under $f$ is given by $\left[\sum_{j=1}^{i-1} w_j,\sum_{j=1}^{i} w_j\right]$ for all $1\le i\le n$. 
  Thus, $f$ is surjective.
  
  Since 
  \begin{equation}
    f(\vec{x}_{-i},\vec{1}_{i})-f(\vec{x}_{-i},\vec{0}_{i})=w_i\cdot f_i(1)-w_i\cdot f_i(0)=w_i
  \end{equation}
  for all $1\le i\le n$, we have $\widetilde{\psi}_i(f)=w_i$ and $\psi_i(f)=w_i$ (due to $\sum_{j=1}^n w_j=1$).
  
  For each $\pi\in \mathcal{S}_n$ and each $i\in N$ we similarly have
  \begin{equation}
    f(\upim)-f(\dnim)=
    \sum_{j=\pi^{-1}(i)}^{n} w_{\pi(j)}
  \end{equation}
  and
  \begin{equation}  
    f(\upi)-f(\dni)=
    \sum_{j=\pi^{-1}(i)+1}^{n} w_{\pi(j)}.
  \end{equation}  
  Thus, the difference equals $w_i$, so that $\varphi_i(f)=w_i$.
\end{proof}

\begin{theorem}
  (Cf.\ \cite[Table  10.4]{grabischaggregation})
  \label{thm_power_product}
  For a positive integer $n$ and positive real numbers $\alpha_1,\dots,\alpha_n$ let $f\colon[0,1]^n\to[0,1]$ be defined by 
  $\vec{x}\mapsto \prod_{i=1}^n x_i^{\alpha_i}$ and $\Lambda=\prod_{j=1}^n \left(\alpha_j+1\right)$. Then, $f$ is a simple aggregation function,
  \begin{equation}
    \varphi_i(f)=\frac{1}{n!\cdot\Lambda}\cdot\left( (n-1)!+\alpha_i\cdot \sum_{T\subseteq N\backslash\{i\}} |T|!\cdot(n-1-|T|)! \cdot 
    \prod_{j\in T} \left(a_j+1\right)\right) 
  \end{equation}
  for each agent $i\in N$, and $\psi_i(f)=\frac{\alpha_i+1}{n+\sum_{j=1}^n \alpha_j}$ for each agent $i\in N$ if $n\ge 2$.
\end{theorem}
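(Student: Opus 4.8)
The plan is to establish the three claims in turn. That $f$ is a simple aggregation function is quick: $f$ is continuous as a finite product of continuous maps; it is weakly monotonic increasing because each factor $x\mapsto x^{\alpha_i}$ is non-decreasing on $[0,1]$ (here $\alpha_i>0$) and all factors are non-negative, so a coordinatewise increase of $\vec{x}$ can only increase the product; and it is surjective since $f(\vec{0})=0$, $f(\vec{1})=1$, and the continuous image of the connected set $[0,1]^n$ is an interval, hence equals $[0,1]$. For $\psi$, note that $0^{\alpha_i}=0$ gives $f(\vec{x}_{-i},\vec{0}_i)=0$ and $f(\vec{x}_{-i},\vec{1}_i)=\prod_{j\ne i}x_j^{\alpha_j}$, so by Fubini
\[
  \widetilde{\psi}_i(f)=\prod_{j\ne i}\int_0^1 x_j^{\alpha_j}\,\mathrm{d}x_j=\prod_{j\ne i}\frac{1}{\alpha_j+1}=\frac{\alpha_i+1}{\Lambda};
\]
summing over $i$ yields $\sum_{j=1}^n\widetilde{\psi}_j(f)=\frac{1}{\Lambda}\sum_{j=1}^n(\alpha_j+1)=\frac{n+\sum_j\alpha_j}{\Lambda}>0$, and dividing gives $\psi_i(f)=\frac{\alpha_i+1}{n+\sum_j\alpha_j}$.

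The heart of the proof is the formula for $\varphi$. Fix $i\in N$ and a permutation $\pi\in\mathcal{S}_n$, and write $S:=\pi_{<i}$. Since $0^{\alpha_i}=0$ and $i\in\pi_{\ge i}$, we have $f(\dnim)=0$ for every $\pi$; likewise $f(\dni)=0$ unless $i$ is last in $\pi$, in which case $f(\dni)=f(\vec{x})=\prod_j x_j^{\alpha_j}$. On the other hand $f(\upim)=\prod_{j\in S}x_j^{\alpha_j}$ and $f(\upi)=x_i^{\alpha_i}\prod_{j\in S}x_j^{\alpha_j}$. Hence the bracketed integrand in Definition~\ref{def_ssi} equals $(1-x_i^{\alpha_i})\prod_{j\in S}x_j^{\alpha_j}$ when $i$ is not last, and equals $\prod_{j\ne i}x_j^{\alpha_j}$ when $i$ is last (the case $S=N\setminus\{i\}$). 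I then group the $n!$ permutations by the prefix set $S=\pi_{<i}$: for each $S\subseteq N\setminus\{i\}$ there are exactly $|S|!\,(n-1-|S|)!$ permutations with $\pi_{<i}=S$, and $i$ is last precisely for $S=N\setminus\{i\}$. Carrying out the one-variable integrals $\int_0^1(1-x_i^{\alpha_i})\,\mathrm{d}x_i=\frac{\alpha_i}{\alpha_i+1}$, $\int_0^1 x_j^{\alpha_j}\,\mathrm{d}x_j=\frac{1}{\alpha_j+1}$, and $\int_0^1\mathrm{d}x_j=1$ for untouched variables, I obtain
\[
  n!\cdot\varphi_i(f)=\sum_{S\subsetneq N\setminus\{i\}}|S|!\,(n-1-|S|)!\cdot\frac{\alpha_i}{\alpha_i+1}\prod_{j\in S}\frac{1}{\alpha_j+1}\;+\;(n-1)!\prod_{j\ne i}\frac{1}{\alpha_j+1}.
\]

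To reach the stated closed form I rewrite $\prod_{j\in S}\frac{1}{\alpha_j+1}=\frac{1}{\Lambda}\prod_{j\notin S}(\alpha_j+1)$, cancel the factor $(\alpha_i+1)$ (legitimate because $i\notin S$), and re-index the first sum by the complement $T:=(N\setminus\{i\})\setminus S$, so that $|S|=n-1-|T|$ and the condition $S\subsetneq N\setminus\{i\}$ becomes $T\ne\emptyset$; this turns the first sum into $\frac{\alpha_i}{\Lambda}\sum_{\emptyset\ne T\subseteq N\setminus\{i\}}|T|!\,(n-1-|T|)!\prod_{j\in T}(\alpha_j+1)$. Restoring the $T=\emptyset$ term by adding and subtracting $\frac{\alpha_i(n-1)!}{\Lambda}$, observing that the second summand above equals $(n-1)!\cdot\frac{\alpha_i+1}{\Lambda}$, and using $(\alpha_i+1)(n-1)!-\alpha_i(n-1)!=(n-1)!$, I arrive at
\[
  \varphi_i(f)=\frac{1}{n!\,\Lambda}\left((n-1)!+\alpha_i\sum_{T\subseteq N\setminus\{i\}}|T|!\,(n-1-|T|)!\prod_{j\in T}(\alpha_j+1)\right),
\]
which is the claimed identity.

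I expect the only genuinely delicate part to be this final reorganization: correctly counting the permutations with a prescribed prefix set $S$, handling the "$i$-is-last" permutations on a separate footing, and then reabsorbing them via the complement re-indexing together with the identity $(\alpha_i+1)(n-1)!-\alpha_i(n-1)!=(n-1)!$. Everything else rests on the single observation that $0^{\alpha_i}=0$ annihilates all the "lower" evaluations of $f$ except in the unique ordering where $i$ moves last, plus routine one-dimensional integration.
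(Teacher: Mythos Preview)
Your proof is correct and follows essentially the same route as the paper's: fixing $i$ and grouping permutations by the prefix set $S=\pi_{<i}$, using $0^{\alpha_i}=0$ to kill the lower evaluations, integrating the factored integrand one variable at a time, and then passing to the complement $T=(N\setminus\{i\})\setminus S$ to match the stated closed form. The paper compresses your final ``add--subtract'' step into the single observation that the $S=N\setminus\{i\}$ integral equals $\tfrac{\alpha_i+1}{\Lambda}=\tfrac{\alpha_i}{\Lambda}+\tfrac{1}{\Lambda}$, but this is exactly your identity $(\alpha_i+1)(n-1)!-\alpha_i(n-1)!=(n-1)!$ in disguise.
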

\begin{proof}
  We directly check the conditions from Definition~\ref{def_simple}, cf.\ the proof of Theorem~\ref{thm_linear_combination}. For 
  each agent $i\in N$ employing the definition of $f$ gives
  \begin{equation}
    \widetilde{\psi}_i(f)=\int_0^1\dots\int_0^1 \prod_{j=1}^{i-1} x_j^{\alpha_j}\cdot \prod_{j=i+1}^{n} x_j^{\alpha_j}  
    \operatorname{d}x_1\dots\operatorname{d}x_{i-1}\, \operatorname{d}x_{i+1}\dots\operatorname{d}x_n.
  \end{equation}
  Since $\int_0^1 cx^\alpha\operatorname{d}x=\frac{c}{\alpha+1}$ for each $\alpha>0$, we recursively compute 
  \begin{equation}
    \widetilde{\psi}_i(f)=\frac{\alpha_i+1}{\prod\limits_{j=1}^n \left(\alpha_j+1\right)},
  \end{equation}  
  for each agent $i\in N$, so that the stated formula for $\psi$ follows.
  
  For the computation of $\varphi(f)$ we first observe 
  \begin{equation}
    \int_0^1\dots\int_0^1 \prod_{j=1}^r x_j^{\beta_j} \operatorname{d}x_1\,\dots\,\operatorname{d}x_n=
    \prod_{j=1}^r \frac{1}{\beta_j+1}    
  \end{equation}
  for $0\le r\le n$ and $\beta_j\in\mathbb{R}_{>0}$ for $1\le j\le r$. Now let $i\in N$ an arbitrary but fixed agent and 
  $\pi\in\mathcal{S}_n$ an arbitrary but fixed permutation. As abbreviation we set $S=\left\{j\in N\,:\, \pi(j)<\pi(i)\right\} 
  \subseteq N\backslash\{i\}$. With this, we have
  \begin{eqnarray}
    && \int_0^1\dots\int_0^1
    \Big(f(\upim)-f(\upi)\Big)+\Big(f(\dni)-f(\dnim)\Big) 
    \operatorname{d}x_1\,\dots\,\operatorname{d}x_n\nonumber\\
    &=& \left(\prod_{j\in S} \frac{1}{a_j+1}\right)\cdot \left(1-\frac{1}{\alpha_i+1}\right)=\frac{\alpha_i}{\Lambda}\cdot
        \prod_{j\in N\backslash (S\cup\{i\})} \left(a_j+1\right) 
  \end{eqnarray}
  for $S\neq N\backslash\{i\}$. For $S=N\backslash \{i\}$ the first expression 
  evaluates to $\frac{\alpha_i+1}{\Lambda}=\frac{\alpha_i}{\Lambda}+\frac{1}{\Lambda}$ instead of $\frac{\alpha_i}{\Lambda}$, so that 
  we obtain the stated formula for $\varphi_i(f)$.  
\end{proof}

In our second example in Section~\ref{sec_importance_measures} we have $n=3$, $\alpha_1=1$, $\alpha_2=2$, and $\alpha_3=3$, so that 
$n!\cdot \Lambda=144$. For $i=1$ the expression $|T|!\cdot(n-1-|T|)! \cdot \prod_{j\in T} \left(a_j+1\right)$ evaluates to 
$2$, $3$, $4$, and $24$ for $T=\emptyset$, $T=\{2\}$, $T=\{3\}$, and $T=\{2,3\}$, respectively, so that $\varphi_1(\tilde{f})=\frac{35}{144}$. 
For $i=2$ we obtain the values $2$, $2$, $4$, and $16$, so that $\varphi_2(\tilde{f})=\frac{50}{144}$.
 
\section{Weighted medians}
\label{sec_median} 

Here we want to give a definition for the weighted median as a simple aggregation function which is more general than 
the description from the introduction in Section~\ref{sec_introduction}. For a positive integer $n$ we consider 
real numbers $x_i$ for $1\le i\le n$. Arrange the values in weakly increasing order $x_{i_1}\le x_{i_2}\le \dots\le x_{i_n}$.  
For $\vec{w}\in\mathbb{R}_{\ge 0}^n\backslash\vec{0}$ let $1\le j\le n$ be the smallest index such that 
$\sum_{h=1}^j w_{i_h}\ge\sum_{h=1}^n w_h/2$. If we have equality, then we set the weighed median, with respect to $\vec{w}$, to 
$\left(x_{i_j}+x_{i_{j+1}}\right)/2$ and to $x_{i_j}$ otherwise. We can easily check that this definition gives a simple 
aggregation function, which we denote by $m_{\vec{w}}$.

In the introduction we have seen that different weight vectors can lead to the same function $m_{\vec{w}}$. Here we want to study the 
equivalence question and the correspondence to weighted games. We restrict ourselves onto the cases where 
$\vec{w}(S)\neq \vec{w}(N\backslash S)$ for every coalition $S\subseteq N$, i.e., there is always a unique weighted median. So, 
modeling as a simple game $v$, for any $S\subseteq N$ the two complementary sets $S$ and $N\backslash S$ form exactly one winning and 
exactly one losing coalition, i.e., $v$ is a weighted constant-sum game. Two simple aggregation functions $m_{\vec{w}}$ and 
$m_{\vec{w}'}$ coincide if and only if $[q;\vec{w}]=[q';\vec{w}']$, where $q=\vec{w}(N)/2$ and $q'=\vec{w}'(N)/2$.      

\begin{lemma}
  If $[q';\vec{w}']$ is constant-sum game for $n$ agents, then there exist $q\in\mathbb{N}_{>0}$ and $\vec{w}\in\mathbb{N}^n$ 
  such that $\sum_{i=1}^n w_i=2q-1$, which is odd, and $[q';\vec{w}']=[q;\vec{w}]$. 
\end{lemma}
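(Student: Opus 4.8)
The plan is to reduce to an integer representation, prove a duality identity that is special to constant-sum games, use it to arrange that the weight sum is odd, and then simply read off the quota $q=(2q-1+1)/2$.

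First I would use the reduction mentioned in Section~\ref{sec_binary_voting} (perturb weights and quota to rationals without changing the game, then clear denominators) to assume from the outset that $q'\in\mathbb{N}_{>0}$ and $\vec{w}'\in\mathbb{N}^n$; set $W:=\sum_{i=1}^n w_i'\in\mathbb{N}$ and let $v$ be the constant-sum simple game represented by $[q';\vec{w}']$. Let $\alpha$ be the maximum $\vec{w}'$-weight of a losing coalition and $\beta$ the minimum $\vec{w}'$-weight of a winning coalition; both are attained since $\emptyset$ is losing and $N$ is winning, and $\alpha<\beta$ because a losing and a winning coalition cannot share a weight. Two observations will be used repeatedly. \emph{(a)} No coalition $S$ satisfies $\alpha<\vec{w}'(S)<\beta$: if $\vec{w}'(S)>\alpha$ then $S$ is not losing, hence winning, hence $\vec{w}'(S)\ge\beta$. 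Consequently $[q;\vec{w}']=v$ for \emph{every} integer quota $q$ with $\alpha<q\le\beta$. \emph{(b)} $\alpha+\beta=W$: taking a losing $T$ with $\vec{w}'(T)=\alpha$, strongness makes $N\setminus T$ winning, so $\beta\le W-\alpha$; taking a winning $S$ with $\vec{w}'(S)=\beta$, properness makes $N\setminus S$ losing, so $W-\beta\le\alpha$; hence $\alpha+\beta=W$.

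Now I would distinguish two cases by the parity of $W$. If $W$ is odd, keep $\vec{w}:=\vec{w}'$ and put $q:=(W+1)/2=(\alpha+\beta+1)/2$. Then $\alpha<q$ (equivalently $\alpha<\beta+1$) and $q\le\beta$ (equivalently $\alpha+1\le\beta$, which holds as $\alpha<\beta$ in integers), so by \emph{(a)} $[q;\vec{w}]=v=[q';\vec{w}']$, while $\sum_{i}w_i=W=2q-1$ is odd, as required. If $W$ is even, then by \emph{(b)} $\alpha$ and $\beta$ have the same parity, so $\beta-\alpha\ge2$ and $\beta-1\in(\alpha,\beta)$; by \emph{(a)} no coalition has weight $\beta-1$. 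Hence replacing $w_1'$ by $w_1'+1$ and keeping the other weights, together with the quota $\beta$ (positive since $\beta>\alpha\ge0$), does not change the game: coalitions not containing agent~$1$ are unaffected, and for $S\ni1$ we have $\vec{w}'(S)+1\ge\beta\iff\vec{w}'(S)\ge\beta-1\iff\vec{w}'(S)\ge\beta$, the last step because no coalition has weight $\beta-1$. The resulting integer representation of the (still constant-sum) game $v$ has odd weight sum $W+1$, so the odd case just treated applies to it and finishes the proof.

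The argument is largely bookkeeping with the ``gap'' of forbidden coalition weights between $\alpha$ and $\beta$; the one point deserving care is the identity $\alpha+\beta=W$ in \emph{(b)}, which genuinely needs both halves of the constant-sum hypothesis — properness to send a minimum-weight winning coalition to a losing complement, and strongness to send a maximum-weight losing coalition to a winning complement. I do not expect a serious obstacle beyond that.
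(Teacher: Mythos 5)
Your proof is correct. It reaches the same conclusion as the paper but organizes the work differently. The paper fixes an integer representation of minimum weight sum and shows, by the trick of decrementing one positive weight and resetting the quota, that at such a minimum the maximum losing weight $l$ and minimum winning weight $u$ satisfy $u=l+1$; only then does it apply the complement argument (properness sends a minimum-weight winning coalition to a losing complement, strongness sends a maximum-weight losing coalition to a winning complement) to conclude $\vec{w}(N)=2q-1$ with $q=u$. You instead prove the identity $\alpha+\beta=W$ for an \emph{arbitrary} integer representation of a constant-sum game --- this is exactly the same complement argument, just deployed before any normalization --- and then handle parity directly: if $W$ is odd the quota $(W+1)/2$ lies in the empty gap $(\alpha,\beta]$ and you are done; if $W$ is even you exploit the fact that the gap has width at least $2$ to bump one weight by $1$ at quota $\beta$ without changing the game, reducing to the odd case. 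So the shared crux is the duality identity needing both halves of the constant-sum hypothesis; what differs is how the parity of the weight sum is forced: the paper's extremal (minimum-weight-sum) argument versus your explicit local adjustment. Your route avoids invoking minimality altogether and makes the role of the gap $(\alpha,\beta)$ of unattained coalition weights explicit; the paper's route is slightly shorter and, as a by-product, exhibits a representation in which the quota equals the minimum winning weight. Your verifications (attainment of $\alpha,\beta$, $\alpha<\beta$, positivity of the quota, and that the modified representation still defines the same game) are all in order.
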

\begin{proof}
  In Section~\ref{sec_binary_voting} we have argued that for every weighted game there exists a representation with integer weights and 
  integer quota. Choose an integer $q$ and $\vec{w}\in\mathbb{N}^n$ such that $[q';\vec{w}']=[q;\vec{w}]$ and $\sum_{i=1}^n w_i$ is minimized. 
  Let $l$ denote the maximum weight of a losing coalition and $u$ the minimum weight of a winning coalition, so that $l+1\le u$. If 
  $l+2\le u$, then we may decrease the weight of one agent, with a positive weight, by one and set the quota to $l+1$, which contradicts 
  the minimality of $\sum_{i=1}^n w_i$. Thus, we have $l+1=u=q$. Let $S$ be a winning coalition with weight $l+1$. Then $N\backslash S$ 
  is losing and has a weight of $\sum_{i=1}^n w_i -l-1\le l$, so that $\vec{w}(N)\le 2q-1$. Let $T$ be a losing coalition of weight $l$. 
  Then $N\backslash T$ is winning and has a weight of $\sum_{i=1}^n w_i -l\ge l+1$, so that $\vec{w}(N)\ge 2q-1$, which gives $\vec{w}(N)
  = 2q-1$.  
\end{proof}

We remark that for $n\ge 8$ agents there may be several representations of a weighted game with integer weights, 
integer quota, and minimum weight sum, see e.g.\ \cite{kurz2012minimum}.

\begin{proposition}
  \label{prop_bzi_median}
  Let $\vec{w}\in\mathbb{N}^n$, where $n\ge 3$ and $\sum_{i=1}^n w_i$ is odd. For $q=(\vec{w}(N)+1)/2$ let $f=m_{\vec{w}}$ be 
  the weighted median simple aggregation function corresponding to the weighted game $v=[q;\vec{w}]$. If all coalitions of size $1$ 
  are losing and all coalitions of size $n-1$ are winning in $v$, then 
  \begin{eqnarray}
    \widetilde{\psi}_i(f)&=&\sum_{S\subseteq N\backslash\{i\}\,:\, v(S)=1}
    \!\!\!\!\!\!\!\!\left|\left\{j\in S\,:\,v(S\backslash\{j\})=0\right\}\right|\cdot |S|!\cdot(n-1-|S|)!/n!\nonumber\\
    &&- \!\!\!\!\!\sum_{S\subseteq N\backslash\{i\}\,:\,v(S\cup\{i\})=1}
    \!\!\!\!\!\!\!\!\!\!\!\!\!\!\!\!\!\!\left|\left\{j\in S\,:\,v(S\backslash\{j\}\!\cup\!\{i\})=0\right\}\right|\cdot 
    |S|!\cdot(n\!-\!1\!-\!|S|)!/n!\label{formula_weighted_median_bzi}
  \end{eqnarray}
  for all $1\le i\le n$.
\end{proposition}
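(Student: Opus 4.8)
The plan is to turn the $(n-1)$-fold integral defining $\widetilde{\psi}_i(f)$ into a combinatorial average over the orderings of $N\setminus\{i\}$, and then to match that average with the two sums in~(\ref{formula_weighted_median_bzi}) by a pivot-counting argument.

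First I would fix the agent $i$ and examine the integrand $f(\vec{x}_{-i},\vec{1}_i)-f(\vec{x}_{-i},\vec{0}_i)$ only for \emph{generic} $\vec{x}_{-i}$, i.e.\ those whose $n-1$ coordinates are pairwise distinct and lie strictly in $(0,1)$; the rest has measure zero and does not affect the integral. For such a point let $\sigma\colon\{1,\dots,n-1\}\to N\setminus\{i\}$ be the bijection with $x_{\sigma(1)}<\dots<x_{\sigma(n-1)}$. Because $\sum_h w_h$ is odd, the tie case in the definition of $m_{\vec{w}}$ never arises, so when $x_i=1$ (the strict maximum of $\vec{x}$) we get $f(\vec{x}_{-i},\vec{1}_i)=x_{\sigma(\ell)}$, where $\ell=\ell(\sigma)$ is the least index with $v(\{\sigma(1),\dots,\sigma(\ell)\})=1$; the hypothesis that $N\setminus\{i\}$ (an $(n-1)$-set) is winning ensures $\ell$ exists. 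Symmetrically, when $x_i=0$ (the strict minimum) we get $f(\vec{x}_{-i},\vec{0}_i)=x_{\sigma(k)}$, where $k=k(\sigma)$ is the least index with $v(\{i,\sigma(1),\dots,\sigma(k)\})=1$; here the hypothesis that the singleton $\{i\}$ is losing ensures $k\ge 1$ exists, and $N$ being winning ensures $k$ exists. Hence on the generic set the integrand equals $x_{\sigma(\ell(\sigma))}-x_{\sigma(k(\sigma))}$.

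Next I would split $[0,1]^{n-1}$ into the $(n-1)!$ regions on which the relative order of the coordinates is fixed. On the region belonging to an ordering $\sigma$ the integrand is the difference $x_{\sigma(\ell(\sigma))}-x_{\sigma(k(\sigma))}$ of two order statistics whose indices depend only on $\sigma$; since $\int_{[0,1]^{n-1}}(\text{$m$-th smallest coordinate})=m/n$ and this is shared equally among the $(n-1)!$ regions,
\begin{equation*}
  \widetilde{\psi}_i(f)=\frac{1}{(n-1)!}\sum_{\sigma}\frac{\ell(\sigma)-k(\sigma)}{n}=\frac{1}{n!}\sum_{\sigma}\big(\ell(\sigma)-k(\sigma)\big),
\end{equation*}
the sum running over all orderings of $N\setminus\{i\}$. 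I would then evaluate the two parts separately. For $\sum_\sigma\ell(\sigma)$: in the ordering $\sigma$ the set $S:=\{\sigma(1),\dots,\sigma(\ell(\sigma))\}$ is the smallest winning initial segment, $\ell(\sigma)=|S|$, and its last element $j$ satisfies $v(S)=1$ and $v(S\setminus\{j\})=0$. Conversely, given a winning $S\subseteq N\setminus\{i\}$ and $j\in S$ with $v(S\setminus\{j\})=0$, the orderings producing this pair are precisely those listing $S\setminus\{j\}$ (in any order), then $j$, then the remaining agents: every shorter initial segment is a subset of the losing set $S\setminus\{j\}$, hence losing by monotonicity, so no further check is needed. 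There are $(|S|-1)!\,(n-1-|S|)!$ of them, each contributing $|S|$, whence $\tfrac{1}{n!}\sum_\sigma\ell(\sigma)$ equals the first sum in~(\ref{formula_weighted_median_bzi}).

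The same counting, applied to the chain $\{i\}\subsetneq\{i,\sigma(1)\}\subsetneq\cdots\subsetneq N$, which starts at the losing coalition $\{i\}$, shows that $\tfrac{1}{n!}\sum_\sigma k(\sigma)$ equals the second sum in~(\ref{formula_weighted_median_bzi}); subtracting gives the stated formula. The main obstacle I anticipate is the second step: one must recognise the integrand as a difference of order statistics with $\sigma$-dependent indices and carry the integration out correctly, after which the pivot count is routine. (As a by-product, writing $\ell(\sigma)-k(\sigma)=\sum_{m=0}^{n-1}\big(v(A_m\cup\{i\})-v(A_m)\big)$ with $A_m=\{\sigma(1),\dots,\sigma(m)\}$ shows that under these hypotheses $\widetilde{\psi}_i(f)=\operatorname{SSI}_i(v)$.)
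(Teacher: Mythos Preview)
Your proof is correct and follows essentially the same route as the paper: decompose the $(n-1)$-fold integral according to the ordering of the coordinates of $\vec{x}_{-i}$, identify the weighted median as an order statistic whose rank is the pivot position in that ordering, evaluate via the Beta integral $\int_0^1 x^s(1-x)^{n-1-s}\,dx=s!(n-1-s)!/n!$, and regroup by the minimal winning prefix $S$ together with its last (critical) element. Your closing by-product, that under the stated hypotheses $\widetilde{\psi}_i(f)=\operatorname{SSI}_i(v)$, is a genuine addition not present in the paper's proof.
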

\begin{proof}  
We consider 
\begin{equation}
  \int_0^1\dots\int_0^1 f(\vec{x}_{-i},\vec{1}_{i})
  \operatorname{d}x_1\dots\operatorname{d}x_{i-1}\, \operatorname{d}x_{i+1}\dots\operatorname{d}x_n
\end{equation}
for an agent $i\in N$. In order to evaluate this expression we decompose the integration domain. For a given state vector $\vec{x}$ 
with pairwise different coordinates let $x_{i_1}<\dots<x_{i_n}=1$ be the ordering of the coordinates and $S\subseteq N\backslash\{i\}$ 
be a set of the form $\left\{i_h\,:\,1\le h\le j\right\}$, where $j$ is chosen minimal such that $S$ is winning in $[q;\vec{w}]$, i.e., 
$f(x_1,\dots,x_{i-1},1,x_{i+1},\dots,x_n)=x_{i_j}$. Note that $\emptyset\neq S\subseteq N\backslash\{i\}$ is a winning coalition and 
$S\backslash\{j\}$ is losing. If $s$ denotes the cardinality of $S$, then  
\begin{eqnarray}
  &&c(S)) \cdot \int_0^1\int_0^{x_s}\dots\int_0^{x_s} \int_{x_s}^1\dots\int_{x_s}^1 x_s \operatorname{d} x_{s+1}\dots \operatorname{d} x_{n-1}
  \operatorname{d} x_1\dots \operatorname{d} x_{s} \nonumber\\
  &=& c(S)\cdot \int_0^1 x^s(1-x)^{n-1-s}\operatorname{d}x=c(S)\cdot \frac{s!\cdot(n-1-s)!}{n!}
\end{eqnarray}
gives the value of the above integral over the integration domain that corresponds to $S$, where 
$c(S)=\left|\left\{j\in S\,:\,v(S\backslash\{j\})=0\right\}\right|$ denotes the number of \emph{critical} agents in $S$.

For the integral over $f(\vec{x}_{-i},\vec{0}_{i})$ only slight adjustments are necessary, so that we finally end up 
with the stated formula.
\end{proof}

For the importance measure $\varphi$ we can also eliminate the integrals in the definition of $\varphi$ for a simple aggregation function 
based on the weighted median. For each fixed agent $i\in N$ and each fixed permutation $\pi\in\mathcal{S}_n$ we define the set 
$S=\left\{j\in N\,:\,\pi(j)<\pi(i)\right\}\subseteq N\backslash\{i\}$. The value of 
$$
  f(\upim)-f(\upi)+f(\dni)-f(\dnim)
$$
is the same for any two permutations $\pi$ and $\pi'$, which correspond to the same set and there correspond exactly $|S|!\cdot(n-1-|S|)!$ 
permutations to a set $S\subseteq N\backslash\{i\}$. This allows to replace the sum over $\mathcal{S}_n$ by a sum over the subsets of 
$N\backslash\{i\}$. In order to evaluate the above expression, we need to know the ordering of the $x_j$ for all $j\in S$ or all $j\in S\cup\{i\}$. 
The value $x_h$ for an agent $h$ outside of such a set is not relevant since it is replaced by either a $0$ or a $1$ in the formula for $\varphi_i(f)$. 
  
For order $y_1\le y_2\le \dots\le y_r$ and weighted median $y_i$ we have
\begin{equation}\label{eq_int_y_i_in_ordering}
  \int_{0}^1 \int_{0}^{y_r}  \int_{0}^{y_{r-1}}\dots \int_{0}^{y_2} y_i \operatorname{d}y_1 \operatorname{d}y_2\dots \operatorname{d}y_r
  =\frac{i}{(r+1)!}.
\end{equation} 
If the weighted median is $1$, then we have
\begin{equation}
  \int_{0}^1 \int_{0}^{y_r}  \int_{0}^{y_{r-1}}\dots \int_{0}^{y_2} 1 \operatorname{d}y_1 \operatorname{d}y_2\dots \operatorname{d}y_r
  =\frac{1}{r!}.
\end{equation}  
It may also happen that the weighted median is zero, where the corresponding integral of course is zero.

So, we have implemented the following algorithm: We loop over all $S\subseteq N\backslash\{i\}$. Then, we loop over all 
possible orderings of the $x_j$, where $j\in S$. 
The integral over
$$
  \Big(f(\upim)-f(\dnim)\Big)
$$
for the integration domain according to the fixed ordering within $S$ can be evaluated by one of the above three cases - depending 
on the position of the weighted median in the ordering. Similarly, for 
$$
  \Big(f(\dni)-f(\upi)\Big)
$$ 
we loop over all possible orderings of the $x_j$, where $j\in S\cup\{i\}$, determine the position of the weighted median, and evaluate 
the integral. 

We have applied the algorithm sketched above for all weighted constant-sum games with up to $n=9$ agents, which supports:
\setcounter{theorem}{0}
\begin{conjecture} 
  \label{conj_ssi_median}
  Let $\vec{w}\in\mathbb{N}^n$, where $n\ge 1$ and $\sum_{i=1}^n w_i$ is odd. For $q=(\vec{w}(N)+1)/2$ let $f=m_{\vec{w}}$ be 
  the weighted median simple aggregation function corresponding to the weighted game $v=[q;\vec{w}]$. Then, 
  $\varphi(f)$ coincides with the Shapley-Shubik index of $[q;\vec{w}]$.
\end{conjecture}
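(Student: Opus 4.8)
The plan is to rewrite $\varphi_i(m_{\vec w})$ as (twice) the Shapley value of an auxiliary game built from $v$, to evaluate that Shapley value in closed form, and then to match it with $\operatorname{SSI}_i(v)$ by a pairing argument that uses the constant‑sum property. Write $f=m_{\vec w}$, $n=|N|$, and for a set function $h\colon 2^N\to\mathbb R$ set $\operatorname{Sh}_i(h)=\sum_{S\subseteq N\setminus\{i\}}\frac{|S|!\,(n-1-|S|)!}{n!}\bigl(h(S\cup\{i\})-h(S)\bigr)$; this is linear in $h$, vanishes on constant set functions, and satisfies $\operatorname{Sh}_i(v)=\operatorname{SSI}_i(v)$. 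First I would note that in Definition~\ref{def_ssi} the terms $f(\upim),f(\dnim)$ depend only on $\vec x_{\pi_{<i}}$ and $f(\upi),f(\dni)$ only on $\vec x_{\pi_{\le i}}$; integrating out the other variables and collecting the $|S|!\,(n-1-|S|)!$ permutations with $\pi_{<i}=S$ converts the definition into $\varphi_i(f)=-\operatorname{Sh}_i(R)$, where $R(S)=\int_{[0,1]^S}\bigl(f(\vec x_S,\vec 1_{N\setminus S})-f(\vec x_S,\vec 0_{N\setminus S})\bigr)\operatorname{d}\vec x_S$. Since $v$ is constant‑sum the weighted median is self‑dual, $f(\vec y)=1-f(\vec 1-\vec y)$: for $\vec y$ with distinct entries, ordered $y_{i_1}<\dots<y_{i_n}$, the median equals $y_{i_k}$ for the unique $k$ with both $\{i_1,\dots,i_k\}$ and $\{i_k,\dots,i_n\}$ winning in $v$, a description of the pivotal agent that is unchanged by reversing the order, so the median value gets complemented; continuity extends the identity to all $\vec y$. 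Hence $f(\vec x_S,\vec 1_{N\setminus S})=1-f(\vec 1-\vec x_S,\vec 0_{N\setminus S})$ and, $\vec 1-\vec x_S$ being again uniform, $R(S)=1-2h(S)$ with $h(S):=\int_{[0,1]^S}f(\vec x_S,\vec 0_{N\setminus S})\operatorname{d}\vec x_S$, so that $\varphi_i(f)=2\operatorname{Sh}_i(h)$.

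Next I would evaluate $h$ and $\operatorname{Sh}_i(h)$ explicitly. For a constant‑sum game, $\{k:y_k\ge t\}$ is winning exactly when $t$ does not exceed the weighted median, so the layer‑cake (Choquet) identity gives $f(\vec y)=\int_0^1\mathbf 1[\{k:y_k\ge t\}\text{ winning in }v]\operatorname{d}t$. Setting the off‑$S$ coordinates to zero, taking expectations, and using that $\{k\in S:x_k\ge t\}$ contains each element of $S$ independently with probability $1-t$, one obtains $h(S)=\sum_{T\subseteq S}v(T)\tfrac{|T|!\,(|S|-|T|)!}{(|S|+1)!}$. From the identity $\tfrac{a!(b+1)!}{(a+b+2)!}-\tfrac{a!\,b!}{(a+b+1)!}=-\tfrac{(a+1)!\,b!}{(a+b+2)!}$ one then derives $h(U\cup\{i\})-h(U)=\sum_{T\subseteq U}\bigl(v(T\cup\{i\})-v(T)\bigr)\tfrac{(|T|+1)!\,(|U|-|T|)!}{(|U|+2)!}$, and inserting this into $\operatorname{Sh}_i(h)$, swapping the $U$‑ and $T$‑summations (for fixed $T$ of size $t$ there are $\binom{n-1-t}{u-t}$ sets $U$ of size $u$ with $T\subseteq U\subseteq N\setminus\{i\}$), and performing the telescoping sum $\sum_{s\ge0}\tfrac1{(t+s+1)(t+s+2)}=\tfrac1{t+1}-\tfrac1{n+1}$ should produce the clean formula $\operatorname{Sh}_i(h)=\sum_{T\subseteq N\setminus\{i\}}\bigl(v(T\cup\{i\})-v(T)\bigr)\tfrac{|T|!\,(n-|T|)!}{(n+1)!}$.

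Then I would compare the two sides: subtracting gives $2\operatorname{Sh}_i(h)-\operatorname{SSI}_i(v)=\tfrac1{n+1}\sum_{T\subseteq N\setminus\{i\}}\bigl(v(T\cup\{i\})-v(T)\bigr)\tfrac{|T|!\,(n-1-|T|)!}{n!}\,(n-1-2|T|)$, in which only the swings of $i$ (sets $T$ with $v(T)=0$ and $v(T\cup\{i\})=1$) contribute, so it remains to prove $\sum_{T\text{ swing for }i}\tfrac{|T|!\,(n-1-|T|)!}{n!}(n-1-2|T|)=0$. Here I would use the map $T\mapsto N\setminus(T\cup\{i\})$ on the set of swings of $i$: it is well defined precisely because $v$ is constant‑sum ($N\setminus(T\cup\{i\})$ is losing iff $T\cup\{i\}$ is winning, and $\bigl(N\setminus(T\cup\{i\})\bigr)\cup\{i\}=N\setminus T$ is winning iff $T$ is losing), it is a fixed‑point‑free involution for $n\ge2$, it sends a swing of size $t$ to one of size $n-1-t$ — leaving $\tfrac{|T|!\,(n-1-|T|)!}{n!}$ unchanged — and it negates the factor $n-1-2t$, so the sum cancels in pairs (for $n=1$ the single term vanishes since $n-1-2\cdot0=0$). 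This yields $\varphi_i(m_{\vec w})=2\operatorname{Sh}_i(h)=\operatorname{SSI}_i([q;\vec w])$ for every $i$.

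I expect the main obstacle to be the realization that $\varphi$ and the Shapley–Shubik index do \emph{not} coincide subset by subset — the per‑subset contributions genuinely differ, so a naive termwise comparison is doomed — and that one is therefore forced to compute $\operatorname{Sh}_i(h)$ in closed form. The Beta‑integral evaluation of $h$ together with the telescoping identity in the second step is the technical heart, and it is exactly what exposes the obstruction as a single signed sum over swings that the constant‑sum involution annihilates; verifying self‑duality of $m_{\vec w}$ and the layer‑cake formula for $f$ are the two further points where the constant‑sum hypothesis is genuinely needed.
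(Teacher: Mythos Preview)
The paper does not prove this statement at all: it is explicitly left as a conjecture, supported only by an exhaustive computer check for all weighted constant-sum games on $n\le 9$ agents, and the conclusion lists settling it as an open problem. So there is no proof in the paper to compare against; your proposal is an attempt to resolve the conjecture.

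Your argument looks sound. The reduction $\varphi_i(f)=-\operatorname{Sh}_i(R)$ with $R(S)=\int_{[0,1]^S}\bigl(f(\vec x_S,\vec 1_{N\setminus S})-f(\vec x_S,\vec 0_{N\setminus S})\bigr)\,d\vec x_S$ is exactly what the paper sketches informally just before the conjecture (grouping permutations by the set $\pi_{<i}$). The self-duality $f(\vec y)=1-f(\vec 1-\vec y)$ for the weighted median of a constant-sum game is correct and gives $R=1-2h$, hence $\varphi_i(f)=2\operatorname{Sh}_i(h)$. The layer-cake identity $f(\vec y)=\int_0^1 v(\{k:y_k\ge t\})\,dt$ identifies $f$ with the Choquet integral of $v$, and the Beta-integral evaluation of $h(S)$ is standard. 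I checked the algebra in your closed form for $\operatorname{Sh}_i(h)$: the inner sum indeed collapses, via $\frac{1}{(m+1)(m+2)}=\frac{1}{m+1}-\frac{1}{m+2}$, to $\frac{(n-1-t)!\,(t+1)!}{n!}\bigl(\frac{1}{t+1}-\frac{1}{n+1}\bigr)=\frac{t!\,(n-t)!}{(n+1)!}$, matching your formula. The resulting difference $2\operatorname{Sh}_i(h)-\operatorname{SSI}_i(v)$ is then exactly $\frac{1}{n+1}\sum_{T}\bigl(v(T\cup\{i\})-v(T)\bigr)\frac{t!(n-1-t)!}{n!}(n-1-2t)$, and the involution $T\mapsto N\setminus(T\cup\{i\})$ on the swings of $i$---well defined precisely because $v$ is constant-sum---pairs a swing of size $t$ with one of size $n-1-t$, preserving $\frac{t!(n-1-t)!}{n!}$ and negating $n-1-2t$. (For $n\ge 2$ the involution has no fixed points, since $T=N\setminus(T\cup\{i\})$ forces $T$ and $T\cup\{i\}$ to partition $N$, hence $T=\emptyset$ and $n=1$; at $n=1$ the single term carries the factor $n-1=0$.) This kills the discrepancy and yields $\varphi_i(m_{\vec w})=\operatorname{SSI}_i(v)$.

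In short: the paper offers only numerical evidence, while you supply a proof. The two places where the constant-sum hypothesis enters---self-duality of the median and the swing involution---are both used correctly, and the telescoping computation that bridges $\operatorname{Sh}_i(h)$ and $\operatorname{SSI}_i(v)$ is the genuinely new ingredient beyond what the paper contains.
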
 

We remark that the number of weighted constant-sum games with up to $n=9$ agents is given by $1$, $1$, $2$, $3$, $7$, $21$, $135$, 
$2470$, and $175\,428$, respectively, see also Table~1 in \cite{0841.90134}, where those objects were called {\lq\lq}games in $Z_n^r${\rq\rq}.

\section{Conclusion}
\label{sec_conclusion}

We have introduced simple aggregation functions which mimic simple games. Via importance measures we aim to measure the importance 
of an agent in a given simple aggregation function. As outlined in the introduction, there is a large variety of applications for this 
quantification. Exemplarily, we have introduced two importance measures, which mimic the Shapley-Shubik and the Penrose-Banzhaf index, 
respectively. Having proven several properties of these two importance measures, a suitable axiomatization remains an open problem. The evaluation 
of both importance measures is computationally involved. For two parametric classes of simple aggregation functions we have derived 
a more direct formula. For simple aggregation functions based on the weighted median, we have stated a reasonable simplification for 
$\psi$. It would be interesting to study whether the expression from Proposition~\ref{prop_bzi_median} can serve as a reasonable power 
index for simple games. We also stated a first simplification for $\varphi$. It would be quite interesting to know if 
Conjecture~\ref{conj_ssi_median} is true in general.

A further line of research might be to identify other interesting parametric classes of simple aggregation functions (arising from 
applications), see \cite{grabischaggregation}. In a second step, exact formulas for our two importance measures are beneficial. Also the 
generalization of further 
power indices for simple games to importance measures or the development of completely new importance measures is an interesting task for the 
future. To incorporate social influence between agents seems worthwhile to study, since social influence can, e.g., undermine the wisdom 
of the crowd effect, see \cite{lorenz2011social}.    


\begin{thebibliography}{10}
\providecommand{\urlprefix}{}
\expandafter\ifx\csname urlstyle\endcsname\relax
  \providecommand{\doi}[1]{doi:\discretionary{}{}{}#1}\else
  \providecommand{\doi}{doi:\discretionary{}{}{}\begingroup
  \urlstyle{rm}\Url}\fi

\bibitem{bahng2012relationship}
Bahng, Y. and Kincade, D.~H., The relationship between temperature and sales:
  Sales data analysis of a retailer of branded women's business wear,
  \emph{International Journal of Retail \& Distribution Management} \textbf{40}
  (2012) 410--426.

\bibitem{banzhaf1964weighted}
Banzhaf~III, J.~F., Weighted voting doesn't work: A mathematical analysis,
  \emph{Rutgers Law Review} \textbf{19} (1964) 317--343.

\bibitem{chen2004eliminating}
Chen, K.-Y., Fine, L.~R., and Huberman, B.~A., Eliminating public knowledge
  biases in information-aggregation mechanisms, \emph{Management Science}
  \textbf{50} (2004) 983--994.

\bibitem{clark1976effects}
Clark, W. A.~V. and Avery, K.~L., The effects of data aggregation in
  statistical analysis, \emph{Geographical Analysis} \textbf{8} (1976)
  428--438.

\bibitem{clemen1999combining}
Clemen, R.~T. and Winkler, R.~L., Combining probability distributions from
  experts in risk analysis, \emph{Risk Analysis} \textbf{19} (1999) 187--203.

\bibitem{cooper2009handbook}
Cooper, H., Hedges, L.~V., and Valentine, J.~C., \emph{The handbook of research
  synthesis and meta-analysis} (Russell Sage Foundation, 2009).

\bibitem{de2017inverse}
De, A., Diakonikolas, I., and Servedio, R.~A., The inverse {S}hapley value
  problem, \emph{Games and Economic Behavior} \textbf{105} (2017) 122--147.

\bibitem{delucchi2011providing}
Delucchi, M.~A. and Jacobson, M.~Z., Providing all global energy with wind,
  water, and solar power, part ii: Reliability, system and transmission costs,
  and policies, \emph{Energy Policy} \textbf{39} (2011) 1170--1190.

\bibitem{dragan2005inverse}
Dragan, I.~C., On the inverse problem for semivalues of cooperative {T}{U}
  games, \emph{International Journal of Pure and Applied Mathematics}
  \textbf{22} (2005) 539--555.

\bibitem{dubey1975uniqueness}
Dubey, P., On the uniqueness of the {S}hapley value, \emph{International
  Journal of Game Theory} \textbf{4} (1975) 131--139.

\bibitem{dubey1979mathematical}
Dubey, P. and Shapley, L., Mathematical properties of the {B}anzhaf power
  index, \emph{Mathematics of Operations Research} \textbf{4} (1979) 99--131.

\bibitem{faigle2016bases}
Faigle, U. and Grabisch, M., Bases and linear transforms of {T}{U}-games and
  cooperation systems, \emph{International Journal of Game Theory} \textbf{45}
  (2016) 875--892.

\bibitem{felsenthal1996alternative}
Felsenthal, D. and Machover, M., Alternative forms of the {S}hapley value and
  the {S}hapley-{S}hubik index, \emph{Public Choice} \textbf{87} (1996)
  315--318.

\bibitem{felsenthal1998measurement}
Felsenthal, D.~S., Machover, M., \emph{et~al.}, \emph{The measurement of voting
  power} (Edward Elgar Publishing, Cheltenham, 1998).

\bibitem{fischer2002weighted}
Fischer, M., Paredes, J.~L., and Arce, G.~R., Weighted median image sharpeners
  for the world wide web, \emph{IEEE Transactions on Image Processing}
  \textbf{11} (2002) 717--727.

\bibitem{galton1907ballot}
Galton, F., The ballot-box, \emph{Nature} \textbf{75} (1907) 509.

\bibitem{galton1907vox}
Galton, F., Vox populi, \emph{Nature} \textbf{75} (1907) 450--451.

\bibitem{genest1986combining}
Genest, C. and Zidek, J.~V., Combining probability distributions: A critique
  and an annotated bibliography, \emph{Statistical Science}  (1986) 114--135.

\bibitem{grabischaggregation}
Grabisch, M., Marichal, J.-L., Mesiar, R., and Pap, E., \emph{Aggregation
  Functions} (Cambridge Univ. Press, Cambridge, 2009).

\bibitem{hegselmann2002opinion}
Hegselmann, R. and Krause, U., Opinion dynamics and bounded confidence models,
  analysis, and simulation, \emph{Journal of Artificial Societies and Social
  Simulation} \textbf{5} (2002).

\bibitem{hegselmann2005opinion}
Hegselmann, R. and Krause, U., Opinion dynamics driven by various ways of
  averaging, \emph{Computational Economics} \textbf{25} (2005) 381--405.

\bibitem{holcombe1989median}
Holcombe, R.~G., The median voter model in public choice theory, \emph{Public
  Choice} \textbf{61} (1989) 115--125.

\bibitem{hu2006asymmetric}
Hu, X., An asymmetric {S}hapley--{S}hubik power index, \emph{International
  Journal of Game Theory} \textbf{34} (2006) 229--240.

\bibitem{kanazawa1998brief}
Kanazawa, S., A brief note on a further refinement of the {C}ondorcet {J}ury
  {T}heorem for heterogeneous groups, \emph{Mathematical Social Sciences}
  \textbf{35} (1998) 69--73.

\bibitem{koriyama2013optimal}
Koriyama, Y., Mac{\'e}, A., Treibich, R., and Laslier, J.-F., Optimal
  apportionment, \emph{Journal of Political Economy} \textbf{121} (2013)
  584--608.

\bibitem{krishnamachari2002impact}
Krishnamachari, L., Estrin, D., and Wicker, S., The impact of data aggregation
  in wireless sensor networks, in \emph{Distributed Computing Systems
  Workshops, 2002. Proceedings. 22nd International Conference on} (IEEE, 2002),
  pp. 575--578.

\bibitem{0841.90134}
Krohn, I. and Sudh\"olter, P., Directed and weighted majority games.,
  \emph{Mathematical Methods of Operations Research} \textbf{42} (1995)
  189--216.

\bibitem{kurz2012minimum}
Kurz, S., On minimum sum representations for weighted voting games,
  \emph{Annals of Operations Research} \textbf{196} (2012) 361--369.

\bibitem{kurz2012inverse}
Kurz, S., On the inverse power index problem, \emph{Optimization} \textbf{61}
  (2012) 989--1011.

\bibitem{kurz2014measuring}
Kurz, S., Measuring voting power in convex policy spaces, \emph{Economies}
  \textbf{2} (2014) 45--77.

\bibitem{kurz2017democratic}
Kurz, S., Maaser, N., and Napel, S., On the democratic weights of nations,
  \emph{Journal of Political Economy} \textbf{125} (2017) 1599--1634.

\bibitem{lorenz2011social}
Lorenz, J., Rauhut, H., Schweitzer, F., and Helbing, D., How social influence
  can undermine the wisdom of crowd effect, \emph{Proceedings of the National
  Academy of Sciences} \textbf{108} (2011) 9020--9025.

\bibitem{maaser2007equal}
Maaser, N. and Napel, S., Equal representation in two-tier voting systems,
  \emph{Social Choice and Welfare} \textbf{28} (2007) 401--420.

\bibitem{martin2017owen}
Martin, M., Nganmeni, Z., and Tchantcho, B., The {O}wen and {S}hapley spatial
  power indices: a comparison and a generalization, \emph{Mathematical Social
  Sciences} \textbf{89} (2017) 10--19.

\bibitem{mcmurray2012aggregating}
McMurray, J.~C., Aggregating information by voting: The wisdom of the experts
  versus the wisdom of the masses, \emph{Review of Economic Studies}
  \textbf{80} (2012) 277--312.

\bibitem{MayerBooker1991}
Meyer, M.~A. and Booker, J.~M., \emph{Eliciting and Analyzing Expert Judgment:
  A Practical Guide} (SIAM, Philadelphia, 2001).

\bibitem{owen1972multilinear}
Owen, G., Multilinear extensions of games, \emph{Management Science}
  \textbf{18} (1972) 64--79.

\bibitem{penrose1946elementary}
Penrose, L.~S., The elementary statistics of majority voting, \emph{Journal of
  the Royal Statistical Society} \textbf{109} (1946) 53--57.

\bibitem{rauhut2011wisdom}
Rauhut, H. and Lorenz, J., The wisdom of crowds in one mind: How individuals
  can simulate the knowledge of diverse societies to reach better decisions,
  \emph{Journal of Mathematical Psychology} \textbf{55} (2011) 191--197.

\bibitem{riker1986first}
Riker, W.~H., The first power index, \emph{Soc. Choice Welf.} \textbf{3} (1986)
  293--295.

\bibitem{shapley1954method}
Shapley, L. and Shubik, M., A method for evaluating the distribution of power
  in a committee system, \emph{American Political Science Review} \textbf{48}
  (1954) 787--792.

\bibitem{surowiecki2004wisdom}
Surowiecki, J., \emph{The wisdom of crowds: Why the many are smarter than the
  few and how collective wisdom shapes business, Economies, Societies and
  Nations}, Vol. 296 (Anchor Books, New York, 2005).

\bibitem{von1953theory}
von Neumann, J. and Morgenstern, O., \emph{Theory of Games and Economic
  Behavior}, 3rd edn. (Princeton University Press, 1953).

\bibitem{xi2009statistical}
Xi, R., \emph{Statistical aggregation: Theory and applications}, Ph.D. thesis,
  Washington University in St. Louis (2009).

\end{thebibliography}

\appendix

\section{An example for the weighted median for $\psi$}

Consider the simple aggregation function given by the weighted median based on the weighted game $[3;2,1,1,1]$ for four agents. 
We aim to evaluate $\psi(f)$. We start by computing $\widetilde{\psi}_1(f)$. Assuming $x_2\le x_3\le x_4$ we have 
$f(1,x_2,x_3,x_4)=x_4$ and $f(0,x_2,x_3,x_4)=x_2$, so that
$$
  \int_{0}^1 \int_{0}^{x_4} \int_{0}^{x_3} f(1,x_2,x_3,x_4)\operatorname{d}x_2\operatorname{d}x_3\operatorname{d}x_4
  =\int_{0}^1 \int_{0}^{x_4} \int_{0}^{x_3} x_4\operatorname{d}x_2\operatorname{d}x_3\operatorname{d}x_4
  =\frac{1}{8}
$$  
and 
$$
  \int_{0}^1 \int_{0}^{x_4} \int_{0}^{x_3} f(0,x_2,x_3,x_4)\operatorname{d}x_2\operatorname{d}x_3\operatorname{d}x_4
  =\int_{0}^1 \int_{0}^{x_4} \int_{0}^{x_3} x_2\operatorname{d}x_2\operatorname{d}x_3\operatorname{d}x_4
  =\frac{1}{24}.
$$
We remark that Equation~(\ref{eq_int_y_i_in_ordering}) gives a general formula for the above integrals. 
Since agents $2$, $3$, and $4$ are symmetric, we have the same result for all $3!$ orderings of $x_2$, $x_3$, and $x_4$, so that 
$\widetilde{\psi}_1(f)=\frac{3}{4}-\frac{1}{4}=\frac{1}{2}$. Note that for the first sum in Equation~(\ref{formula_weighted_median_bzi}) 
only $S=\{2,3,4\}$ leads to a non-zero multiplier, which here is $3$. In the second sum only the set $\{2\}$, $\{3\}$, and $\{4\}$ 
lead to a non-zero multiplier, which always is $1$. Of course this gives the same result. 

Next we compute $\widetilde{\psi}_2(f)$. If $x_1\le x_3\le x_4$, then $f(x_1,1,x_3,x_4)=x_3$ and $f(x_1,0,x_3,x_4)=x_1$, so that 
$$
  \int_{0}^1 \int_{0}^{x_4} \int_{0}^{x_3} f(x_1,1,x_3,x_4)\operatorname{d}x_1\operatorname{d}x_3\operatorname{d}x_4
  =\int_{0}^1 \int_{0}^{x_4} \int_{0}^{x_3} x_3\operatorname{d}x_1\operatorname{d}x_3\operatorname{d}x_4
  =\frac{1}{12}
$$
and
$$
  \int_{0}^1 \int_{0}^{x_4} \int_{0}^{x_3} f(x_1,0,x_3,x_4)\operatorname{d}x_1\operatorname{d}x_3\operatorname{d}x_4
  =\int_{0}^1 \int_{0}^{x_4} \int_{0}^{x_3} x_1\operatorname{d}x_1\operatorname{d}x_3\operatorname{d}x_4
  =\frac{1}{24}.
$$
If $x_3\le x_1\le x_4$, then $f(x_1,1,x_3,x_4)=x_1$ and $f(x_1,0,x_3,x_4)=x_1$, so that
$$
  \int_{0}^1 \int_{0}^{x_4} \int_{0}^{x_1} f(x_1,1,x_3,x_4)\operatorname{d}x_3\operatorname{d}x_1\operatorname{d}x_4
  =\int_{0}^1 \int_{0}^{x_4} \int_{0}^{x_1} x_1\operatorname{d}x_3\operatorname{d}x_1\operatorname{d}x_4
  =\frac{1}{12}
$$
and
$$
  \int_{0}^1 \int_{0}^{x_4} \int_{0}^{x_1} f(x_1,0,x_3,x_4)\operatorname{d}x_3\operatorname{d}x_1\operatorname{d}x_4
  =\int_{0}^1 \int_{0}^{x_4} \int_{0}^{x_1} x_1\operatorname{d}x_3\operatorname{d}x_1\operatorname{d}x_4.
  =\frac{1}{12}.
$$
If $x_3\le x_4\le x_1$, then $f(x_1,1,x_3,x_4)=x_1$ and $f(x_1,0,x_3,x_4)=x_4$, so that
$$
  \int_{0}^1 \int_{0}^{x_1} \int_{0}^{x_4} f(x_1,1,x_3,x_4)\operatorname{d}x_3\operatorname{d}x_4\operatorname{d}x_1
  =\int_{0}^1 \int_{0}^{x_1} \int_{0}^{x_4} x_1\operatorname{d}x_3\operatorname{d}x_4\operatorname{d}x_1
  =\frac{1}{8}
$$
and
$$
  \int_{0}^1 \int_{0}^{x_1} \int_{0}^{x_4} f(x_1,0,x_3,x_4)\operatorname{d}x_3\operatorname{d}x_4\operatorname{d}x_1
  =\int_{0}^1 \int_{0}^{x_1} \int_{0}^{x_4} x_4\operatorname{d}x_3\operatorname{d}x_4\operatorname{d}x_1
  =\frac{1}{12}.
$$
Exchanging agent~$3$ with agent~$4$ gives the same number again, so that 
$$
  \widetilde{\psi}_2(f)=2\cdot\left(\frac{1}{12}+\frac{1}{12}+\frac{1}{8}\right)-2\cdot\left(\frac{1}{24}+\frac{1}{12}+\frac{1}{12}\right)
  =\frac{7}{12}-\frac{5}{12}=\frac{1}{6}.
$$
Looking at the first sum of Equation~(\ref{formula_weighted_median_bzi}) again, we have multipliers of two for $S=\{1,3\}$ or $S=\{1,4\}$, a
multiplier of one for $S=\{1,3,4\}$, and multipliers of zero in all other cases. I.e., the first sum equals $\frac{2\cdot 2\cdot 1}{24}+
\frac{2\cdot 2\cdot 1}{24}+\frac{1\cdot 6\cdot 1}{24}=\frac{7}{12}$. For the second sum we have a multiplier of two for $S=\{3,4\}$, multipliers 
of one for $S=\{1\}$, $S=\{1,3\}$, $S=\{1,4\}$, and multipliers of zero in all other cases. I.e., the second sum equals $\frac{2\cdot 2\cdot 1}{24}+
\frac{1\cdot 1\cdot 2}{24}+\frac{1\cdot 2\cdot 1}{24}+\frac{1\cdot 2\cdot 1}{24}=\frac{5}{12}$. 

The resulting power distribution $\frac{1}{6}\cdot(3,1,1,1)$ coincides with the Shapley-Shubik, the Penrose-Banzhaf, and the Public Good 
index of $[3;2,1,1,1]$. However, for $[4;3,1,1,1,1]$ we should get a power distribution of $\frac{1}{32}\cdot(12,5,5,5,5)$, which differs 
from the three power indices of the considered weighted game. If no error occurred during the computation, then for $[4;3,2,1,1]$ we get 
$\tilde{\psi}_1(f)=\frac{18}{120}-\frac{6}{120}=\frac{1}{10}$, $\tilde{\psi}_2(f)=\frac{14}{120}-\frac{10}{120}=\frac{1}{30}$, and 
$\tilde{\psi}_3(f)=\tilde{\psi}_4(f)=\frac{14}{120}-\frac{6}{120}=\frac{1}{15}$. Note that the power distribution is not monotone in the 
weights.

\end{document}